\documentclass[%
 reprint,
superscriptaddress,
nofootinbib,
nobibnotes,
 amsmath,amssymb,
 aps,floatfix,
]{revtex4-2}

\usepackage{amsthm}
\usepackage{graphicx,xcolor}
\usepackage{dcolumn}
\usepackage{bm}
\usepackage[breaklinks=true,colorlinks,citecolor=blue,linkcolor=blue,urlcolor=blue]{hyperref}
\usepackage[mathlines]{lineno}
\usepackage[caption=false, position=bottom]{subfig}
\usepackage{floatrow}
\usepackage{mathtools}

\DeclareMathOperator{\Var}{Var}

\usepackage[utf8]{inputenc}
\usepackage[english]{babel}

\newtheorem{theorem}{Theorem}
\begin{document}

\title{Quantum Energy Landscape and VQA Optimization}

\author{Joonho Kim}
\affiliation{
 School of Natural Sciences, Institute for Advanced Study, Princeton, NJ 08540, USA.
}
 
\author{Yaron Oz}%
\affiliation{
 School of Natural Sciences, Institute for Advanced Study, Princeton, NJ 08540, USA.
}
\affiliation{
 Raymond and Beverly Sackler School of Physics and Astronomy, Tel-Aviv University, Tel-Aviv 69978, Israel.
}

\begin{abstract}
We study the effects of entanglement and control parameters on the energy landscape and optimization performance of the variational quantum circuit. Through a systematic analysis of the Hessian spectrum, we characterize the local geometry of the energy landscape at a random point and along an optimization trajectory. We argue that decreasing the entangling capability and increasing the number of circuit parameters have the same qualitative effect on the Hessian eigenspectrum. Both the low-entangling capability and the abundance of control parameters increase the curvature of non-flat directions, contributing to the efficient search of area-law entangled ground states as to the optimization accuracy and the convergence speed.

\end{abstract}

\maketitle

\section{Introduction}


The variational quantum algorithm (VQA) is arguably the most promising framework to achieve the near-term quantum advantage \cite{VQE2014,QAOA}. 
The structure of typical VQA computation consists of three parts: 
First, the quantum processor constructs the wavefunction $|\psi(\mathbf{\bm\theta})\rangle$ by acting a sequence of unitary gate operations, which often depend on randomly chosen control parameters $\bm{\theta}$, on the initial product state $|0\rangle^{\otimes n}$. 
Second, the quantum processor measures the variational wavefunction, where outputs of the repeated measurements, e.g., $\langle \psi(\bm\theta) | Z_i | \psi(\bm\theta)\rangle$ for $1 \leq i \leq n$,
are passed to the classical processor for quantum state tomography. 
Third, the classical processor estimates the energy function $\mathcal{L}(\bm\theta) \equiv \langle \psi(\bm\theta) | \mathcal{H} | \psi(\bm\theta) \rangle $, where $\mathcal{H}$ is the Hamiltonian that encodes a given problem,  and searches an optimal parameter $\bm\theta^* = \arg\min_{\bm\theta} \mathcal{L}(\bm\theta)$ that minimizes it. Such optimization is typically done by the local gradient search that requires the iterative evaluation of the energy function and the updated parameter. See \cite{Endo_2021,cerezo2020variational} for the recent reviews on the VQA algorithms.

At the heart of these VQA approaches lies the variational circuit ansatz that generates quantum wavefunctions  depending on a set of control parameters stored and manipulated in classical devices.  Common choices of unitary gates are usually limited to one-qubit rotation gates and two-qubit entangling gates acting only upon adjacent qubit pairs for the feasibility of hardware implementation \cite{Kandala_2017_hardware}.
There are numerous ways to design the variational circuits even within this limited class. 
In most applications, we rely on the heuristic approach to find an effective circuit ansatz whose expected performance is not a priori known. The goal of this paper is to bring design principles for an efficient variational ansatz concerning its entangling capability and number of control parameters,
by measuring how these factors influence the quantum energy landscape defined by $\mathcal{L}(\bm\theta)$ and the performance of parameter optimization.

The overwhelming majority of the Hilbert space is occupied by highly entangled generic quantum states that exhibit the volume-law scaling of entanglement entropies, i.e., proportional to the number of subsystem qubits. 
As a result, the broader range of states the circuit ansatz $|\psi(\bm\theta)\rangle$ can express, the higher the mean entanglement entropy over randomly sampled states $\{| \psi(\bm\theta_s)\rangle\}_s$ becomes. In this regard, the average entanglement entropy $\mathcal{R}^{(k)}$ of the circuit generated states can represent the expressibility of the variational ansatz. 

There is, however, a negative correlation between the average entanglement entropy and the optimization success of the circuit parameters $\bm\theta$ towards the ground states \cite{holmes2021connecting,Kim:2021ffs}. 
The area law scaling of the entanglement entropy is a commonly expected correlation pattern of the ground states of local gapped Hamiltonians \cite{Hastings_2007}. In contrast, most of the circuit parameter space is associated with highly-entangled typical quantum states, especially if the mean entanglement entropy of the variational circuit is close to the maximum. That makes the local parameter search of a highly expressible ansatz less likely to succeed in finding a trajectory towards the low-entangled ground states.

A closely related geometric statement is known as the barren plateau theorem:
Assuming the 2-design characteristic of the random circuit ensemble, the gradient of the energy function $\mathcal{L}(\bm\theta)$ with respect to the circuit variables $\bm\theta$ is zero on average, with the variance exponentially suppressed for the growing system size $n$ \cite{McClean2018bp}. This has been shown in \cite{arrasmith2021equivalence} to be equivalent to the exponential decay of $\text{Var}_{\bm\theta}\left[\mathcal{L}(\bm\theta + \bm\alpha) - \mathcal{L}(\bm\theta)\right]$ with respect to $n$, indicating how generically flat the quantum energy landscape of the highly expressible variational circuit is. In this work, we will further investigate how flatness of the quantum landscape is correlated with the entangling capability of the circuit, by examining the local geometry near generic random points as well as certain special points extracted from the parameter optimization trajectory.

Since the quantum states generated by the circuit are controlled by continuous variables $\bm\theta$  stored and manipulated in the classical computer, the number of classical parameters can also be a crucial factor that affects the energy landscape and optimization performance. One extreme case was studied in \cite{highdepth}: The local gradient search for the over-parameterized circuit can approximate the Hamiltonian ground state very precisely, in both cases where the ground state entanglement entropy follows the volume-law scaling and the area-law scaling, despite the high expressibility of the variational circuit ansatz. More systematically, given a fixed amount of the average entanglement entropies, one can vary 
the number of control variables by adding single-qubit Pauli rotation gates.
We will quantify how it affects the flatness of the energy landscape and the optimization performance. 

Our study will be conducted by investigating the Hessian matrix of the energy function, $H_{ab} (\bm\theta) = {\partial_{a}\partial_{b}} \mathcal{L}(\bm\theta)$, evaluated at random initial points \cite{cerezo2020hessian_bp}, final convergence points \cite{highdepth}, and multiple intermediate points chosen from the optimization trajectory \cite{Huembeli_2021_hessian_convergence}. The Hessian eigenvalue spectrum reveals the information about the shape of the quantum energy landscape and how the local parameter search works. 
We will characterize an important relationship between its eigenvalues and the entangling capability of the circuit: low-entangling circuits gives a massive bulk of small Hessian eigenvalues and a few large outliers. 
Such spectral pattern is often observed in classical deep neural networks with over-parameterization  \cite{sagun2017eigenvalues,ghorbani2019investigation,fort2019emergent}. 
We will indeed observe the close similarity between the spectral evolution caused by adding more circuit parameters and by reducing the circuit entangling capability.


The rest of the paper is organized as follows: Section~\ref{sec:ansatz} introduces the basic form of the variational circuit used in this paper and analyzes correlation functions of the circuit density matrices.
Section~\ref{sec:entanglement} systematically studies how stochastic dropout of two-qubit entangling gates affects the VQA performance of the variational circuit and the quantum energy landscape through the numerical evaluation of the Hessian matrix. It includes theorems on the top Hessian eigenvalue as well as the optimization rate. 
The impact of the control parameters on the VQA performance and the shape of the quantum energy landscape is studied by adding single-qubit rotation gates in Section~\ref{sec:param}. Finally, Section~\ref{sec:discussion} summarizes and provides suggestions for future research.

\section{Variational Circuit Ansatz}
\label{sec:ansatz}

\subsection{Circuit Architecture}

The parametrized quantum circuit used in this paper for various numerical experiments is a hardware-efficient ansatz \cite{Kandala_2017_hardware}, made of 1-qubit Pauli-Y rotations and 2-qubit CZ entanglers acting on  qubit pairs,
\begin{align}
\begin{split}
    R_{y,i}(\varphi) &= \left[e^{i \sigma_y \varphi}\right]_i,\\
    CZ_{i,j} &= \text{diag}(1, 1, 1, -1)_{i, j}.
\end{split}
\end{align}
The basic building block of our unitary circuit in its primary form is the 2-qubit unitary operator of Figure~\ref{fig:circuit_diagram_right}, 
\begin{align}
    U_{i,j}(\varphi_{a},\varphi_{b}) = CZ_{i,j} \cdot \left(R_{y,i}(\varphi_a)\otimes R_{y,j}(\varphi_{b})\right),
    \label{eq:gate}
\end{align}
acting on the 4-dimensional hyperplane spanned for the $(i,j)$ qubit pair, embedded in the $n$-qubit Hilbert space. 

The operators \eqref{eq:gate} acting on consecutive $(i, i+1)$ qubits compose together the following layer unitary operators:
\begin{align}
    U_\ell =
    \begin{cases}
    \bigotimes_{m=1}^{\lfloor n/2 \rfloor} U_{2m-1, 2m}(\varphi_{\ell, 2m-1}, \varphi_{\ell, 2m}) & \text{odd $\ell$}\\
    \textstyle\bigotimes_{m=1}^{\lfloor n/2 \rfloor} U_{2m, 2m+1}(\varphi_{\ell, 2m}, \varphi_{\ell, 2m+1})&\text{even $\ell$}  
    \end{cases}
\end{align}
where the periodic boundary condition $i \simeq i +n$ is imposed on the $n$-qubit lattice. It is  convenient to use the collective notation $U_\ell(\bm\varphi_\ell)$ where ${\bm\varphi_\ell}$ denotes all $\{{\varphi_{\ell,i}}\}_{i=1}^n$. The variational circuit states (see Figure~\ref{fig:circuit_diagram_left} for an illustration of the circuit architecture) are then generated by sequentially acting $L$ instances of the layer unitary operators on the initial product state $|0\rangle^{\otimes n}$, i.e., 
\begin{align}
    |\psi(\bm\theta)\rangle= U_L({\bm\varphi}_L) \cdots U_1({\bm\varphi}_1) |0\rangle^{\otimes n} = U({\bm\theta})  |0\rangle^{\otimes n} 
    \label{eq:circuit_state},
\end{align}
where $\bm\theta = \{{\bm\varphi}_\ell\}_{\ell=1}^L$. We will index the $nL$ components of the circuit parameter $\bm\theta$ by $1\leq a,b,\cdots \leq nL$.


\begin{figure}[t]
\centering
\subfloat[Architecture]{
        \centering
        \label{fig:circuit_diagram_left}
        \includegraphics[height=1.85cm]{./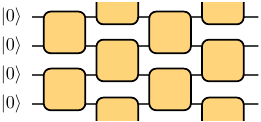}}
\hfill
\subfloat[Gate]{
        \centering
        \label{fig:circuit_diagram_right}
        \includegraphics[height=1.85cm]{./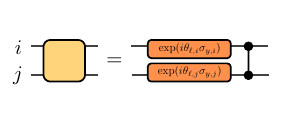}}    
\caption{The variational quantum circuit used in this paper.}
\label{fig:circuit_diagram}
\end{figure}
 

\subsection{Circuit Density Matrix}
The variational circuit generates a pure quantum state \eqref{eq:circuit_state} whose corresponding density matrix is given by $\rho_{\alpha\beta}(\bm\theta) = U_{\alpha 1}(\bm\theta)U^{*}_{\beta 1}(\bm\theta)$ for all $1 \leq \alpha,\beta \leq 2^n$. In particular, the circuit unitary matrix  $U_{\alpha\beta}(\bm\theta)$ of Figure~\ref{fig:circuit_diagram} is real-valued and orthogonal, being parameterized by $nL$ circular variables $\{\theta_a\}_{a=1}^{nL}$ that have the period of $\pi$ \cite{fontana2020optimizing}. 
The associated parameter space is therefore the compact torus $T^{nL}$. We find that the density matrix $\rho_{\alpha\beta}(\bm\theta)$ can be written as follows in its Fourier expansion form:
 \begin{equation}
 \rho_{\alpha\beta}= \frac{\delta_{\alpha\beta}}{2^n} +  \sum_{q}  c^{q}_{\alpha\beta} \prod_{a=1}^{nL} \left(\sin{(2\theta_a)}\right)^{q_{2a-1}} \left(\cos{(2\theta_a)}\right)^{q_{2a}}
 \label{L}
 \end{equation}
 where the sum is taken over the set of $(2nL)$-dimensional discrete vectors, $q\in \{0,1\}^{2nL}$, except the zero $\{0^{2nL}\}$.
 
The expectation value of the density matrix $\rho_{\alpha\beta}(\bm\theta)$ with respect to the uniform measure on $\bm\theta \in T^{nL}$ reads:
\begin{equation}
\mathbb{E}_{\bm\theta}[\rho_{\alpha\beta}(\bm\theta)] = {\delta_{\alpha\beta}}/{2^n} \ ,
\label{er}
\end{equation}
where we used the orthogonality of the sine and cosine functions. Consequently, the expectation values of the energy function $\mathcal{L}(\bm\theta)$ and its derivatives are given by:
\begin{align}
    \label{eq:bp-energy-mean}
    \mathbb{E}_{\bm\theta} [\mathcal{L}(\bm\theta)]  &= \text{Tr}(\mathcal{H}) / 2^n \ ,  \\
    \mathbb{E}_{\bm\theta} [\partial_a\mathcal{L}(\bm\theta)]   &= 
    \mathbb{E}_{\bm\theta} [\partial_a\partial_b\mathcal{L}(\bm\theta)]  =  \cdots = 0 \ .
    \label{eq:bp-mean}
\end{align}
 
Computation of the second-order correlation functions of the energy function $\mathcal{L}(\bm\theta)$ and its derivatives requires the knowledge of two-point functions, $\mathbb{E}_{\bm\theta}[\rho_{\alpha\beta}(\bm\theta)\rho_{\rho\sigma}(\bm\theta)]$, of the density matrices.

\begin{theorem}
The two-point correlation function of the density matrix $\rho_{\alpha\beta}(\bm\theta)$ satisfies:
\begin{eqnarray}
  \mathbb{E}_{\bm\theta}[\rho_{\alpha\beta}(\bm\theta)\rho_{\rho\sigma}(\bm\theta)] &=& A_{\alpha\beta}(\delta_{\alpha\rho}\delta_{\beta\sigma} + \delta_{\alpha\sigma}\delta_{\beta\rho}) + \nonumber \\  &+&  A_{\alpha\rho} \delta_{\alpha\beta}\delta_{\rho\sigma} \ .
 \label{rho2pt}
\end{eqnarray}
There is no summation over repeated indices in (\ref{rho2pt}).
\end{theorem}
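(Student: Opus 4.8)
\emph{Proof strategy.} The plan is to combine the Fourier representation~(\ref{L}) with the orthogonality of the trigonometric system on the torus, and then to extract the surviving tensor structure from the layered architecture of the ansatz. Concretely, I would insert~(\ref{L}) for each of the two density-matrix factors in $\mathbb{E}_{\bm\theta}[\rho_{\alpha\beta}\rho_{\rho\sigma}]$ and expand. Writing $M_q(\bm\theta)\equiv\prod_{a=1}^{nL}(\sin 2\theta_a)^{q_{2a-1}}(\cos 2\theta_a)^{q_{2a}}$, the product breaks into the constant piece $\delta_{\alpha\beta}\delta_{\rho\sigma}/4^{n}$, two single sums proportional to $\sum_q c^q M_q$, and the double sum $\sum_{q,q'} c^{q}_{\alpha\beta}\,c^{q'}_{\rho\sigma}\,M_q M_{q'}$. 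Averaging uniformly over $\bm\theta\in T^{nL}$, the single sums vanish by~(\ref{er}); for the double sum I would use that the average factorizes over the $nL$ circles and that $\mathbb{E}_{\theta_a}[(\sin 2\theta_a)^{m}(\cos 2\theta_a)^{m'}]$ is nonzero only when $m$ and $m'$ are both even. Since here $m,m'\in\{0,1,2\}$, this forces $q_{2a-1}=q'_{2a-1}$ and $q_{2a}=q'_{2a}$ for every $a$, i.e. $q=q'$, so that
\begin{equation}
\mathbb{E}_{\bm\theta}[\rho_{\alpha\beta}\rho_{\rho\sigma}]
= \frac{\delta_{\alpha\beta}\delta_{\rho\sigma}}{4^{n}}
+ \sum_{q} r_q\, c^{q}_{\alpha\beta}\, c^{q}_{\rho\sigma},
\qquad r_q \equiv \mathbb{E}_{\bm\theta}\!\left[M_q(\bm\theta)^2\right]>0 ,
\label{eq:diagsum}
\end{equation}
with each $r_q$ an explicit product of factors drawn from $\{1,\tfrac12,\tfrac18\}$.

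It remains to show that the diagonal sum in~(\ref{eq:diagsum}) carries exactly the index pattern of~(\ref{rho2pt}), and this is where the circuit architecture enters. I would use two inputs. First, a symmetry: the uniform measure is invariant under reversing the sign of every rotation angle attached to a fixed qubit $k$, and this map conjugates $U(\bm\theta)$ by $Z_k$ — which commutes with every $CZ$ and sends $R_{y,k}(\varphi)\mapsto R_{y,k}(-\varphi)$ — so, since $Z_k|0\rangle^{\otimes n}=|0\rangle^{\otimes n}$, the first column transforms as $U_{\alpha 1}\mapsto(-1)^{a_k}U_{\alpha 1}$ with $a_k$ the $k$-th bit of $\alpha$. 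Invariance of the measure then gives $\mathbb{E}_{\bm\theta}[\rho_{\alpha\beta}\rho_{\rho\sigma}]=(-1)^{a_k+b_k+r_k+s_k}\,\mathbb{E}_{\bm\theta}[\rho_{\alpha\beta}\rho_{\rho\sigma}]$ for every $k$, so the correlator vanishes unless $a_k+b_k+r_k+s_k$ is even for all $k$. Second, the support of the Fourier coefficients: since each layer contributes, per qubit, a single power of $\sin$ or $\cos$ of its angle before the parameter-free $CZ$, the coefficient $c^{q}_{\alpha\beta}$ is supported only on basis-index pairs compatible with the slot pattern $q$, and this can be tracked layer by layer. Combining the two, the even-parity configurations actually reached by~(\ref{eq:diagsum}) should collapse onto $\{\alpha,\beta\}=\{\rho,\sigma\}$, producing the two $\delta_{\alpha\rho}\delta_{\beta\sigma}$-type terms, together with $\alpha=\beta$ and $\rho=\sigma$, producing the $\delta_{\alpha\beta}\delta_{\rho\sigma}$-type term; one then defines $A_{\alpha\beta}$ as the resulting (manifestly symmetric) coefficient. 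As a normalization check, the identity $\sum_{\gamma}\mathbb{E}_{\bm\theta}[\rho_{\alpha\gamma}\rho_{\gamma\sigma}]=\mathbb{E}_{\bm\theta}[\rho_{\alpha\sigma}]=\delta_{\alpha\sigma}/2^{n}$, which follows from $\rho^{2}=\rho$, must be consistent with~(\ref{rho2pt}); this fixes $\sum_\gamma A_{\alpha\gamma}+2A_{\alpha\alpha}=2^{-n}$.

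I expect the genuine obstacle to be the second input: the combinatorial bookkeeping of which (slot-pattern, basis-index) pairs occur in the Fourier coefficients, and the cancellation of all remaining off-diagonal contributions. This is precisely where the special features of this ansatz — real $R_y$ gates, diagonal $CZ$ entanglers, and the brickwork layering — are used, rather than generic unitary-design reasoning; the $Z_k$ symmetry and the projector identity $\rho^{2}=\rho$ narrow the admissible patterns and fix normalizations but do not, by themselves, deliver the rank-one-looking tensor of~(\ref{rho2pt}).
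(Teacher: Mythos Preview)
Your approach differs substantially from the paper's. The paper does not attempt the Fourier/combinatorial computation you outline; instead it \emph{postulates} as its starting point the tensor ansatz
\[
\mathbb{E}_{\bm\theta}[\rho_{\alpha\beta}\rho_{\rho\sigma}]
= A_{\alpha\beta}(\delta_{\alpha\rho}\delta_{\beta\sigma}+\delta_{\alpha\sigma}\delta_{\beta\rho})
+ B_{\alpha\rho}\,\delta_{\alpha\beta}\delta_{\rho\sigma},
\]
calling this ``the general form of the two-point functions'' for real symmetric $\rho$. From there the argument is short: the pointwise rank-one identity $\rho_{\alpha\alpha}\rho_{\beta\beta}=\rho_{\alpha\beta}^{2}$ (immediate from $\rho_{\alpha\beta}=U_{\alpha 1}U_{\beta 1}$ with real $U$) gives $\mathbb{E}[\rho_{\alpha\alpha}\rho_{\beta\beta}]=\mathbb{E}[\rho_{\alpha\beta}\rho_{\alpha\beta}]$ and hence $A_{\alpha\beta}=B_{\alpha\beta}$ for $\alpha\neq\beta$; the diagonal case is handled by absorbing the redundant combination $2A_{\alpha\alpha}+B_{\alpha\alpha}$ into a redefined $A_{\alpha\alpha}$. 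Purity $\rho^{2}=\rho$ then supplies the normalization you also wrote, $3A_{\alpha\alpha}+\sum_{\gamma\neq\alpha}A_{\alpha\gamma}=2^{-n}$.

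So the piece you flag as the ``genuine obstacle'' --- collapsing the diagonal Fourier sum onto the three Kronecker-delta channels --- is exactly the step the paper does not prove but assumes. Your $Z_k$-parity argument and the layer-by-layer tracking of the Fourier support are aimed at \emph{justifying} that ansatz from the circuit structure, which is more ambitious and, as you correctly note, not yet completed. The parity selection rule is valid but by itself insufficient: for instance with $n=2$ the quadruple $(\alpha,\beta,\rho,\sigma)=(00,11,01,10)$ passes the even-parity test at every bit yet matches none of the three delta patterns, so any remaining cancellation must come from the detailed coefficients $c^{q}_{\alpha\beta}$. If you want to match the paper, drop the Fourier machinery, take the two-coefficient ansatz as given, and reduce $B$ to $A$ via the rank-one identity above; if you want a self-contained derivation, the combinatorial program you sketch is the right direction but the hard part remains to be done.
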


\begin{proof}
For real symmetric matrices $\rho_{\alpha\beta}$, the general form of the two-point functions reads:
\begin{equation}
 \mathbb{E}_{\bm\theta}[\rho_{\alpha\beta}(\bm\theta)\rho_{\rho\sigma}(\bm\theta)] = A_{\alpha\beta}  (\delta_{\alpha\rho}\delta_{\beta\sigma} +\delta_{\alpha\sigma}\delta_{\beta\rho})  + B_{\alpha\rho} \delta_{\alpha\beta}\delta_{\rho\sigma}.
 \label{rho}
\end{equation}
It should satisfy the following relations:
\begin{align}
\begin{split}
\mathbb{E}_{\bm\theta}[\rho_{\alpha\alpha}(\bm\theta)\rho_{\beta\beta}(\bm\theta)] &= \mathbb{E}_{\bm\theta}[\rho_{\alpha\beta}(\bm\theta)\rho_{\beta\alpha}(\bm\theta)]\\
&= \mathbb{E}_{\bm\theta}[\rho_{\alpha\beta}(\bm\theta)\rho_{\alpha\beta}(\bm\theta)],
\label{cond}\end{split}\\
\textstyle\sum_{\beta=1}^{2^n} \mathbb{E}_{\bm\theta}[\rho_{\alpha\beta}(\bm\theta)\rho_{\beta\sigma}(\bm\theta)] &=  \mathbb{E}_{\bm\theta}[\rho_{\alpha\sigma}(\bm\theta)]\label{cond0}
\end{align}
where the second equality \eqref{cond0} reflects the purity of the density matrix. By substituting \eqref{rho} into \eqref{cond}, we obtain the relation $A_{\alpha\beta} = B_{\alpha\beta}$ for $\alpha\neq\beta$. There is a redundancy in keeping both $A_{\alpha\alpha}$ and $B_{\alpha\alpha}$, because only the combination $2A_{\alpha\alpha}+B_{\alpha\alpha}$ appears independently for $\alpha=\beta$. Without loss of generality, we can rewrite \eqref{rho} as (\ref{rho2pt})
for which \eqref{cond0} becomes equivalent to 
\begin{align}
    \textstyle(3A_{\alpha\alpha} + \sum_{\beta \neq \alpha} A_{\alpha\beta})\delta_{\alpha \sigma} = 2^{-n}\,\delta_{\alpha\sigma} \ .
    \label{eq:cond2}
\end{align}

\end{proof}

One can analytically derive the coefficients $A_{\alpha\beta}$ in two limiting cases. When two-qubit  entanglers are completely omitted from the circuit ansatz \eqref{eq:circuit_state}, the coefficients are
\begin{align}
    A_{\alpha\beta} = 
    \begin{dcases}
    3^{c_0(\alpha_2 \vee \beta_2) + c_1(\alpha_2 \wedge \beta_2)}/{2^{3n}} & \text{for } \alpha \neq \beta\\
    3^{n-1} / 2^{3n} & \text{for } \alpha =\beta \ ,
    \end{dcases}
    \label{eq:coeff-product}
\end{align}
where the subscripts $2$ in $\alpha_2$ and $\beta_2$ indicate that $\alpha$ and $\beta$ are in their binary representation. The function $c_{0/1}(s_2)$ gives the number of $0/1$'s in the input binary string $s_2$. We also have checked that \eqref{eq:coeff-product} satisfies \eqref{eq:cond2} by manipulating symbolic expressions in \texttt{Mathematica}.

Another case that allows the exact analysis is when the circuit distribution is the Haar orthogonal ensemble, for which one can replace the integration over $T^{nL}$ with the Haar integral over
the orthogonal group $O(2^n)$. The one-point and two-point functions are obtained by the Weingarten calculus \cite{Collins_2006} as:
\begin{align}
    \label{eq:bp-ortho0}
    \mathbb{E}_{\rho \in O(2^n)}[\rho_{\alpha\beta}] &= 2^{-n} \delta_{\alpha\beta} \ , \\
    \mathbb{E}_{\rho \in O(2^n)}[\rho_{\alpha\beta}\rho_{\rho\sigma}] &= \frac{\delta_{\alpha\beta}\delta_{\rho\sigma} + \delta_{\alpha\rho}\delta_{\beta\sigma} + \delta_{\alpha \sigma}\delta_{\beta \rho}}{2^n(2^n+2)} \ .
    \label{eq:bp-ortho}
\end{align}
Note that \eqref{eq:bp-ortho} implies the existence of the barren plateau problem \cite{McClean2018bp} for the orthogonal 2-design ensemble. When the variational circuit  \eqref{eq:circuit_state} behaves as an approximate orthogonal 2-design, the variance of random energy gradients decays exponentially with the system size $n$ as:
\begin{equation}
   \text{Var}_{\rho \in O(2^n)}[\partial_a {\cal L}]  \sim \frac{\text{Tr}(\mathcal{H}^2)}{4^n} \sim \mathcal{O}(2^{-n}) \ ,
   \label{eq:bp-vargrad}
\end{equation}
assuming the scaling behavior $\text{Tr}(\mathcal{H}^2) \sim \mathcal{O}(2^n)$ of various 1d spin-chain Hamiltonian systems.

\begin{figure}[t]
  \centering
  \subfloat[]{\includegraphics[height=4.1cm]{./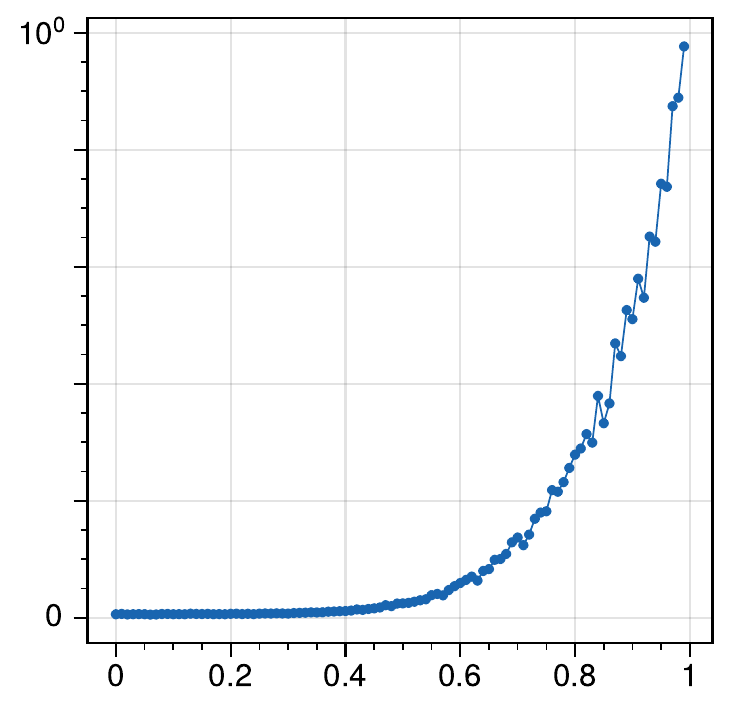}\label{fig:bp-p}}
  \subfloat[$\mathbb{E}_{\bm\theta}${[$\rho_{\alpha\beta}(\bm\theta)\rho_{\alpha\beta}(\bm\theta)$]}]{\includegraphics[height=4.1cm]{./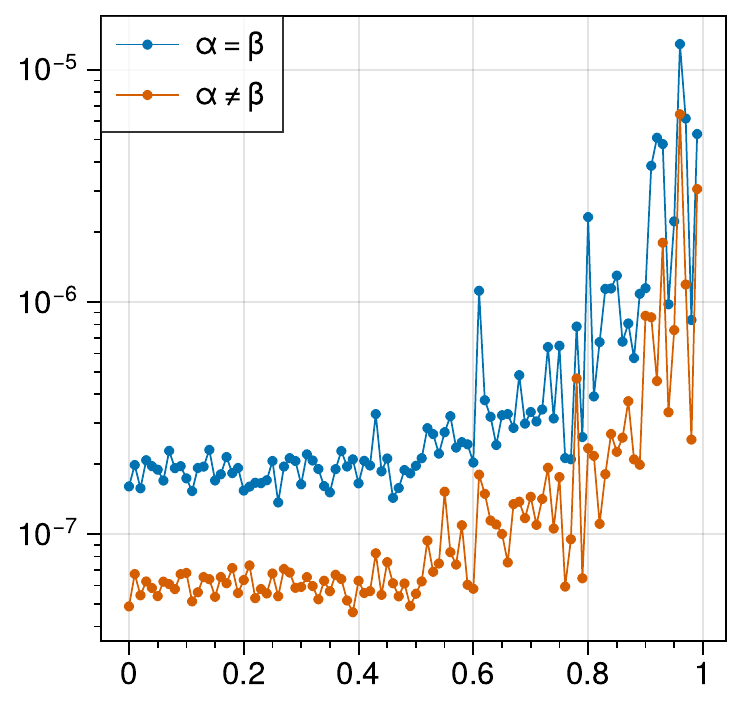}\label{fig:bp-var}}
  \caption{Correlation functions of the circuit density matrix at $L=56$, estimated from $1500$ samples for each probability $p$ that removes the two-qubit CZ entanglers.}
  \label{fig:bp-entropy}
\end{figure}

In Section~\ref{sec:entanglement}, we will explore systematic reduction of the average entanglement entropy in the circuit states \eqref{eq:circuit_state} by randomly and repeatedly removing the CZ entanglers with the probability $p$. Given sufficient circuit depth, the case with $p=0$ corresponds to \eqref{eq:bp-ortho} that follows the Haar orthogonal ensemble, while the case with $p=1$ leads to the $n$-qubit product state for which we find  \eqref{eq:coeff-product}. We can  interpolate these two extreme cases through numerical estimation of $\text{Var}_{
\bm\theta}[\partial_a {\cal L}]$ and $\mathbb{E}_{\bm\theta}[\rho_{\alpha\beta}(\bm\theta)\rho_{\alpha\beta}(\bm\theta)]$ for $0 \leq p < 1$. Specifically in the $L=56$ case, the results are summarized in Figure~\ref{fig:bp-entropy}.

\section{Entanglement, Energy Landscape and Optimization}
\label{sec:entanglement}

\begin{figure*}[t]
\centering
\subfloat[Energy difference $\Delta E$]{
    \includegraphics[height=4.1cm]{./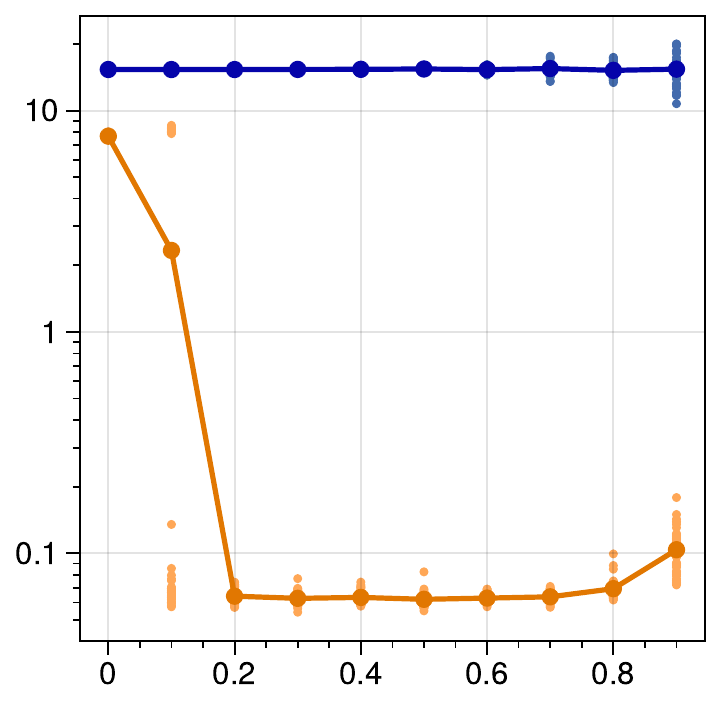}\label{fig:optim_ising_g1_left}}
\subfloat[Renyi-2 entropy $\mathcal{R}^{(2)}$]{
    \includegraphics[height=4.1cm]{./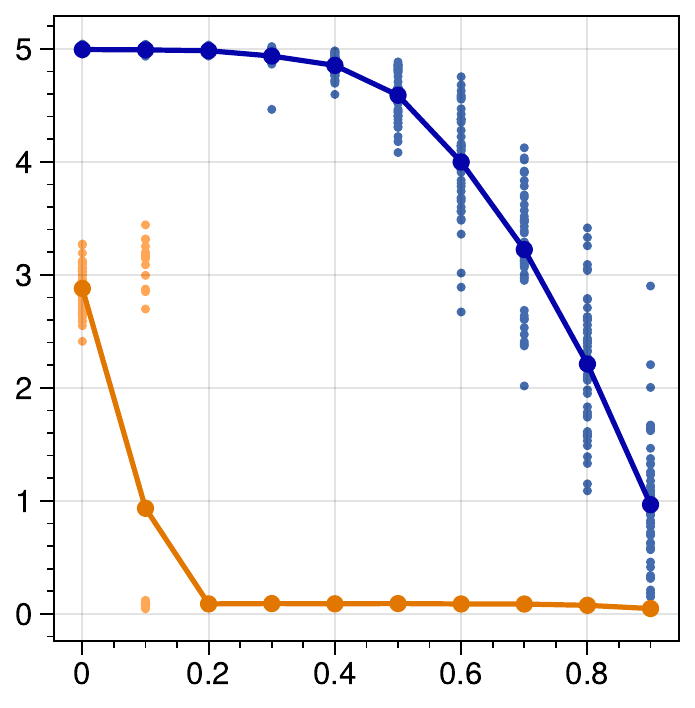}\label{fig:optim_ising_g1_middle}}
\subfloat[\% that reaches $\Delta E < 0.1$]{
    \includegraphics[height=4.1cm]{./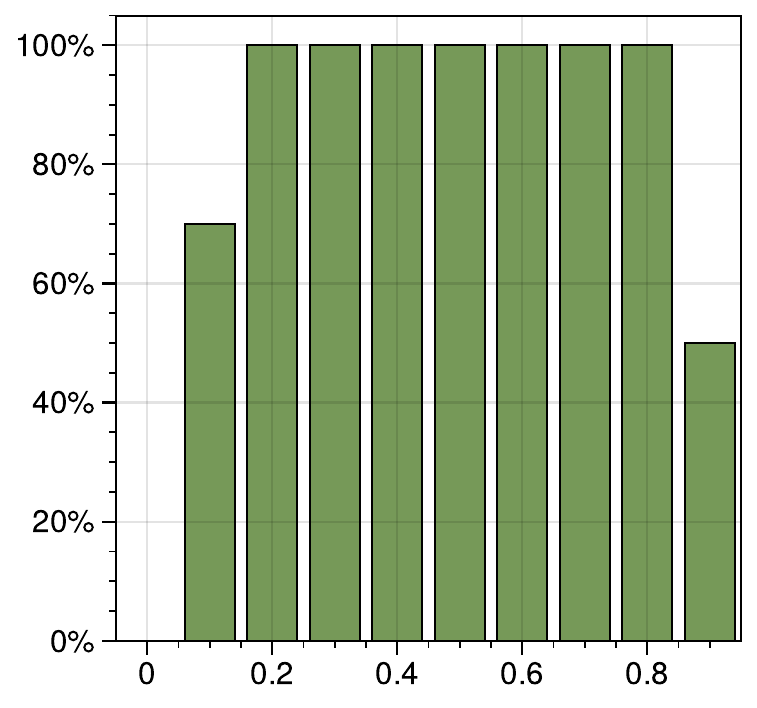}\label{fig:optim_ising_g1_success}}
\subfloat[Steps $\tau$ until $\Delta E < 0.1$]{
    \includegraphics[height=4.1cm]{./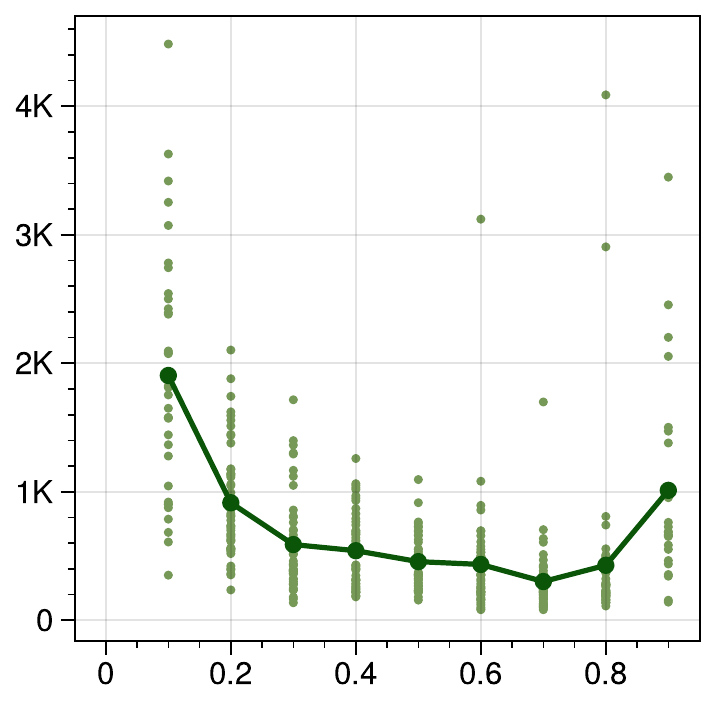}\label{fig:optim_ising_g1_right}}
\caption{Collections of $50$ VQA instances to approximate the $n=12$ Ising ground state with the circuits with $L=56$ layers, for each probability $p$ of omitting the CZ gates. The results are successful for $p \in [0.2,0.8]$.}
    \label{fig:optim_ising_g1}
\end{figure*}

This section will explore how the shape of the quantum energy landscape varies with different levels of entangling capability. We will consider the average geometry at random generic parameters \cite{McClean2018bp,cerezo2020hessian_bp} as well as the local geometry around optimization trajectories \cite{Huembeli_2021_hessian_convergence}.
We will fix the number of the circuit control parameters
and systematically vary the entangling capability of the circuit. That can be done by dropping out the CZ entanglers contained in the two-qubit operator \eqref{eq:gate} with probability $p$. 

\subsection{Circuit State Entanglement} 

We define the entangling capability of the variational circuit as the average entanglement entropy over the circuit state ensemble, $\{|\psi(\bm\theta)\rangle : \bm\theta \in [0, 2\pi)^{\otimes nL} \}$, estimated through the sample average over $M$ circuit states:
\begin{align}
    \frac{1}{M}\sum_{q=1}^M \mathcal{R}^{(k)} (|\psi(\bm\theta_q)\rangle) \ \text{ where }\ \bm\theta_q \sim \mathcal{U}(0,2\pi)^{\otimes nL} \ .
\end{align}
As typical quantum states that comprise an exceedingly large portion of the Hilbert space are highly-entangled, it represents how expressible the variational ansatz is, i.e., how various quantum states $|\Phi\rangle$ can be closely approximated by the circuit ansatz within tolerance $\varepsilon$, 
\begin{align}
    \|{|\Phi\rangle - |\psi(\bm\theta^*)\rangle}\| < \varepsilon,
\end{align}
at a certain parameter $\bm\theta^*  \in [0, 2\pi)^{\otimes nL}$.

It was shown in \cite{Kim:2021ffs} that the average entanglement entropy of the dense circuit at $p=0$ grows \emph{linearly} for an increasing circuit depth $L$, then saturates to a maximum possible value $n_A - c_k$ beyond a critical depth $L_s < L$. $n_A$ and $c_k$ denote the subsystem size and a non-negative constant that depends on the circuit architecture and the order $k$ of the entanglement entropy $\mathcal{R}^{(k)}$, respectively. Since the saturation depth $L_s$ itself also scales linearly with the system size $n$ \cite{Kim:2021ffs}, the number of two-qubit CZ gates introduced until the saturation of the entangling capability is roughly $m_s=O(n^2)$. We will choose the circuit depth $L^*$ such that the mean number of the CZ gates in the stochastic circuit $m =\tfrac{1}{2} n L^* (1-p)$ can be 
\begin{align}
    \begin{cases}
    m \gtrsim m_s & \text{for } p \rightarrow 0\\
    m < m_s & \text{for } p \rightarrow 1 \ . 
    \end{cases}
\end{align}
Specifically, $L^* =56$ will be sufficient for our purposes.

\subsection{Optimization Accuracy and Speed}

To reveal the connection between the entangling capability and VQA performance of the variational circuit, we consider solving the ground state of the most prototypical system, i.e., the 1d transverse-field Ising model:
\begin{align}
    \mathcal{H} = J \sum_{\langle i,j\rangle }Z_i Z_j +g \sum_i X_i \ \  \text{ with }\ \  J=1, \, g=1,
    \label{eq:ising}
\end{align}
and measure the deviation of the circuit energy from the exact ground-level energy,
\begin{align}
    \Delta E \equiv \langle \psi(\theta) | \mathcal{H} | \psi(\theta) \rangle - E_g \ .
\end{align}
Figure~\ref{fig:optim_ising_g1} is the collection of $50$ independent optimization results for the $L=56$ circuits with $p \in \{0, 0.1, \cdots, 0.9\}$. 

The curves in Figures~\ref{fig:optim_ising_g1_left}~and~\ref{fig:optim_ising_g1_middle} are respectively the energy gap $\Delta E$ and Renyi-2 entanglement entropy $\mathcal{R}^{(2)}$ evaluated under an equal partitioning of $n=12$ qubits. The blue/orange colors indicate whether the displayed values are before/after applying the gradient descent \eqref{gdm} to circuit parameters $\tau=5000$ times.
When the average entanglement entropy of the pre-optimization states saturates to the maximum possible value, as the cases for $p \leq 0.1$, Figure~\ref{fig:optim_ising_g1_left} exhibits the formulation of orange dot clusters around $\Delta E \sim 9$. It means the failure of many circuit instances in reducing $\Delta E$ via the local gradient search \eqref{gdm}. The gradient descent fails to make a trajectory towards the Ising ground state, while stopping at a suboptimal extremum in the quantum energy landscape. It happens because for the circuit with maximum entangling capability, finding the desired parameter $\bm\theta^*$ such that $|\psi(\bm\theta^*)\rangle \simeq |\Psi\rangle$ is roughly as hard as searching the Ising ground state $|\Psi\rangle$ over the entire Hilbert space. In contrast, the circuit with less entangling capability limits the local search to a subregion of the Hilbert space consisting of low-entangled states, which may still include the ground state, thereby facilitating its discovery \cite{Eisert_2010, Kim:2021ffs}.

Figures~\ref{fig:optim_ising_g1_success}~and~\ref{fig:optim_ising_g1_right} display two complementary metrics about the circuit performance with respect to the VQA optimization, representing how difficult the local gradient descent is to find the circuit parameter $\bm\theta^*$. When the entangling capability of the circuit is too low or too high, the VQA optimization may fail to approximate the ground state, and $\Delta E$ does not fall within a tolerance range.
Figure~\ref{fig:optim_ising_g1_success} shows the sample success rate of VQA trials lying within an acceptable error margin $\Delta E < 0.1$. It not only confirms the failure of maximally entangling circuits in finding the ground state, but also displays the dropping VQA performance of the low-entangling circuits with $p> 0.8$. Also consistent is the minimum number of parameter updates for each successful circuit instance to 
satisfy $\Delta E < 0.1$, as summarized in Figure~\ref{fig:optim_ising_g1_right}. Its mean and variance are minimized at $p=0.7$, while being larger as $p$ approaches the boundary value of $0.1$ or $0.9$. These results highlight that the VQA optimization works most efficiently with variational circuits in an intermediate range of the entangling capability \cite{Kim:2021ffs}.

A specific value of the optimal $p$ may vary for a different choice of the depth $L$ or target state $|\Psi \rangle$ that follows the area-law entanglement. Nevertheless, the basic shape of the curves will remain the same, and the circuit with medium expressibility will be most outperforming. 


\begin{figure}[t]
    \centering
    \subfloat[$\tau=0$]{
    \includegraphics[height=4.1cm]{./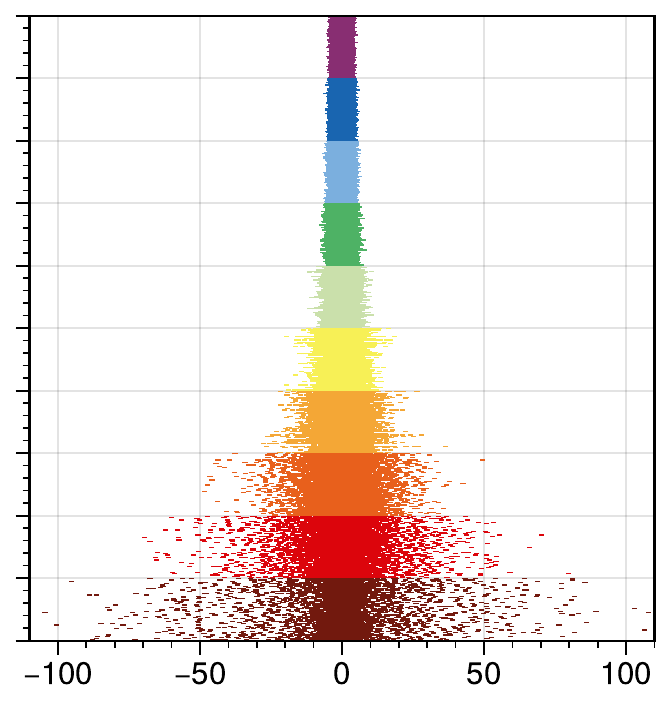}\label{fig:hess_ising_g1_init_spectrum}}
    \subfloat[$\tau=5\times 10^3$]{
    \includegraphics[height=4.1cm]{./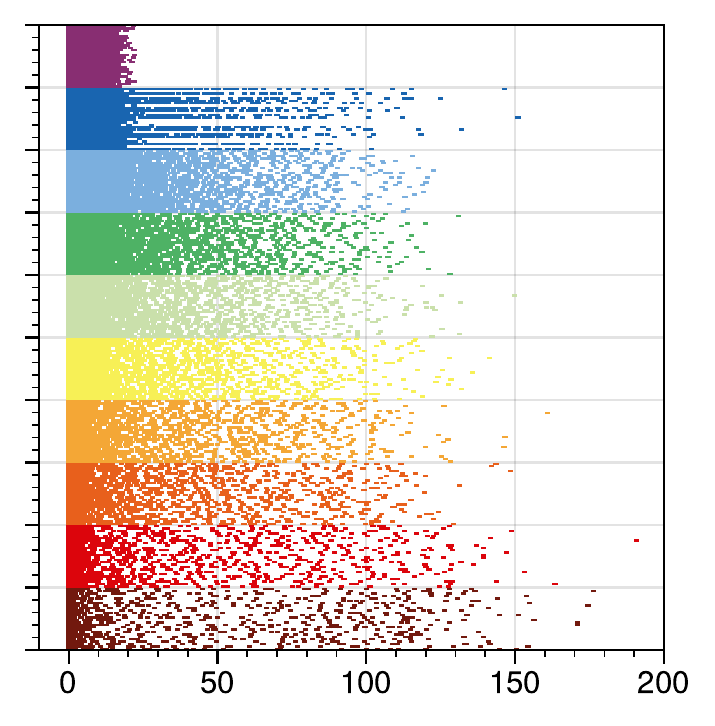}\label{fig:hess_ising_g1_final_spectrum}}\\
    \includegraphics[width=0.95\textwidth]{./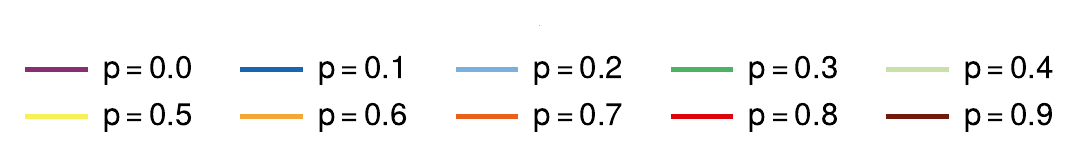}
    \caption{A visual collection of 500 sample Hessian eigenspectra at $L=56$ for different probabilities $p$ of omitting a CZ-gate, after taking $\tau$ parameter update steps.}
    \label{fig:hess_ising_g1_spectrum}
\end{figure}

\subsection{Hessian Eigenspectrum and Landscape Geometry}

We recall from \eqref{eq:bp-ortho} that  high entangling capability incurs the barren plateau phenomenon \cite{McClean2018bp,cerezo2020hessian_bp,entanglement-bp} that any partial derivative of the energy function $\mathcal{L}(\bm\theta)$ becomes statistically zero \eqref{eq:bp-mean} on average over $\bm\theta \in T^{nL}$ with an exponentially decaying variance \eqref{eq:bp-vargrad} as to the system size $n$.
Since an initial gradient at arbitrarily chosen $\bm\theta$ vanishes with exponentially large probability, the gradient-based optimization typically cannot even start moving towards local minima in the large $n$ limit. 

However, the systematic control of $p$ can make the circuit ansatz move from the high entangling limit and dramatically improve the VQA performance, as evident in Figure~\ref{fig:optim_ising_g1}. Thus we want to further characterize the geometric implication of entangling capability, i.e., how exactly it eases the barren plateau problem and makes a large impact on the VQA performance. We will compare side-by-side the Hessian eigenspectra of the circuits with a fixed number of parameters and different values of $p$.

The repeated application of the parameter-shift rule allows us to express a higher-order derivative of the energy function as a combination of the first-order gradients. As a result, \cite{cerezo2020hessian_bp} found a probabilistic inequality for the second-order derivatives, which in turn leads to the following probabilistic bound on the Hessian eigenvalues.

\begin{theorem}
The largest absolute eigenvalue of the Hessian, $H_{ab} = \partial_{a}\partial_{b}\mathcal{L}(\bm\theta)$, is probabilistically bounded as
\begin{align}
 \Pr(|h_{\rm max}| \geq c ) \leq   \frac{2n^2 L^2 \Var_{\bm\theta}(\partial_a {\cal L}(\bm\theta))}{c^2} \ .
 \label{th1}
\end{align}
\end{theorem}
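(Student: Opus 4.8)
The plan is to bound the largest-magnitude Hessian eigenvalue by a sum over all its entries and then apply Chebyshev's inequality to each entry. First I would use the standard operator-norm estimate: for any symmetric $nL \times nL$ matrix $H$, the largest absolute eigenvalue satisfies $|h_{\rm max}| = \|H\|_{\rm op} \leq \|H\|_F = \bigl(\sum_{a,b} H_{ab}^2\bigr)^{1/2}$, so that $|h_{\rm max}| \geq c$ forces $\sum_{a,b} H_{ab}^2 \geq c^2$. (Even the cruder bound $|h_{\rm max}| \leq \max_a \sum_b |H_{ab}|$ would work, but the Frobenius route pairs most cleanly with a variance hypothesis.) A union bound then gives $\Pr(|h_{\rm max}| \geq c) \leq \sum_{a,b} \Pr\bigl(H_{ab}^2 \geq c^2/(nL)^2\bigr)$, where the factor $(nL)^2$ is the number of index pairs $(a,b)$.

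Next I would control each $\Pr(H_{ab}^2 \geq c^2/(nL)^2)$. The key input, imported from \cite{cerezo2020hessian_bp}, is that the parameter-shift rule lets one write the second derivative $H_{ab} = \partial_a\partial_b\mathcal{L}(\bm\theta)$ as a fixed linear combination — with coefficients summing in absolute value to a small constant — of first-order gradients $\partial_a \mathcal{L}$ evaluated at shifted parameter points. Since $\mathbb{E}_{\bm\theta}[\partial_a\mathcal{L}] = 0$ by \eqref{eq:bp-mean}, and the shifted points are themselves uniformly distributed on $T^{nL}$, each such gradient has mean zero and variance equal to $\Var_{\bm\theta}(\partial_a\mathcal{L})$. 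Bounding the second moment of the linear combination by (a constant times) $\Var_{\bm\theta}(\partial_a\mathcal{L})$ and applying Chebyshev then yields $\Pr(H_{ab}^2 \geq t) \leq C\,\Var_{\bm\theta}(\partial_a\mathcal{L})/t$ for an absolute constant $C$. Summing over the $(nL)^2$ pairs with $t = c^2/(nL)^2$ produces a bound of the form $C\,n^2L^2 \cdot n^2L^2 \Var/c^2$ — so I would need the sharper accounting where the Frobenius-to-operator-norm step and the per-entry tail are arranged to leave exactly one factor of $n^2L^2$, matching the stated constant $2$.

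The main obstacle is bookkeeping of the combinatorial constants: getting precisely $2n^2L^2$ rather than a larger polynomial in $nL$ requires using the exact parameter-shift identity for second derivatives (which expresses $H_{ab}$ as a combination of just four gradient evaluations with coefficients $\pm 1/2$, or equivalently a $2 \times 2$ finite-difference stencil), so that the per-entry variance is bounded cleanly by $\Var_{\bm\theta}(\partial_a\mathcal{L})$ up to a factor absorbed into the $2$, and then reading off the union bound. I would therefore carry out the steps in this order: (i) Frobenius/operator-norm inequality reducing to entrywise tails; (ii) union bound over index pairs; (iii) parameter-shift expansion of $H_{ab}$ into shifted first-order gradients; (iv) Chebyshev on each entry using mean-zero and the variance identity; (v) reassemble and verify the constant. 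Steps (i), (ii), (iv), (v) are routine; step (iii) is where the precise constant is determined and where I would lean directly on the probabilistic second-derivative inequality already established in \cite{cerezo2020hessian_bp}.
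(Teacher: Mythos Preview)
Your overall strategy matches the paper's: bound $|h_{\max}|$ by the Frobenius norm $\bigl(\sum_{a,b}H_{ab}^2\bigr)^{1/2}$ and then control the entries through the second-derivative bound of \cite{cerezo2020hessian_bp}. The extra factor of $(nL)^2$ you flag comes not from step~(iii) but from your step~(ii): passing from $\sum_{a,b}H_{ab}^2\geq c^2$ to the event that some single entry exceeds $c^2/(nL)^2$ via pigeonhole, and then union-bounding, is wasteful. The fix is to apply Markov's inequality directly to the nonnegative sum,
\[
\Pr\Bigl(\textstyle\sum_{a,b}H_{ab}^2\geq c^2\Bigr)\;\leq\;\frac{1}{c^2}\sum_{a,b}\mathbb{E}_{\bm\theta}\bigl[H_{ab}^2\bigr].
\]
Since $\mathbb{E}_{\bm\theta}[H_{ab}]=0$ by \eqref{eq:bp-mean}, each $\mathbb{E}_{\bm\theta}[H_{ab}^2]=\Var_{\bm\theta}(H_{ab})$, and the inequality \eqref{r2} from \cite{cerezo2020hessian_bp} is precisely the Chebyshev form of $\Var_{\bm\theta}(H_{ab})\leq 2\,\Var_{\bm\theta}(\partial_a\mathcal{L})$. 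Summing over the $(nL)^2$ index pairs then yields $2n^2L^2\,\Var_{\bm\theta}(\partial_a\mathcal{L})/c^2$ on the nose---no further ``sharper accounting'' is required. The paper's chain \eqref{1} displays both the Frobenius bound and the cruder $nL\max_{a,b}|H_{ab}|$ step; it is the former, paired with Markov on the sum rather than a union bound, that delivers the stated constant.
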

\begin{proof}
Let us denote the Hessian eigenvalues by $h_a$ where $1 \leq a\leq nL$. Since $\text{Tr}(H^2) = \sum_{a,b} H_{ab}^2 = \sum_{a} h_a^2$, 
\begin{eqnarray}
 |h_\text{max}| &\leq&  \textstyle  \big(\sum_a h_a^2\big)^{1/2} = 
 \big(\sum_{a,b} H_{ab}^2\big)^{1/2} \nonumber\\ &\leq&  nL\,  \max_{a,b} (|H_{ab}|) \ ,
 \label{1}
\end{eqnarray}
where $\max_{a,b} (|H_{ab}|)$ is the largest absolute value of the Hessian matrix elements.
Using the inequality that holds for every $1\leq a,b \leq nL$ \cite{cerezo2020hessian_bp},
\begin{equation}
    \Pr(|H_{ab}| \geq c) \leq \frac{2 \Var_{\bm\theta}(\partial_a {\cal L}(\bm\theta))}{c^2} \ ,
    \label{r2}
\end{equation}
we arrive at the probabilistic bound  (\ref{th1}) from (\ref{1}).
\end{proof}

\begin{figure*}[t]
\centering
\subfloat[Top/bottom eigenvalues]{
    \includegraphics[height=4.1cm]{./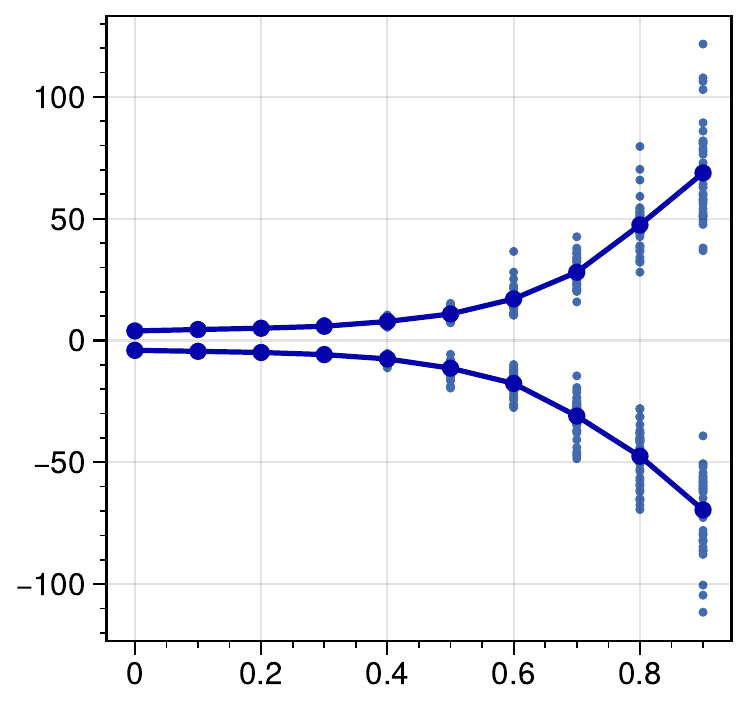}\label{fig:hess_ising_g1_init_outlier}}
\subfloat[\% of large eigenvalues ($|\lambda| > 5$)]{
    \includegraphics[height=4.1cm]{./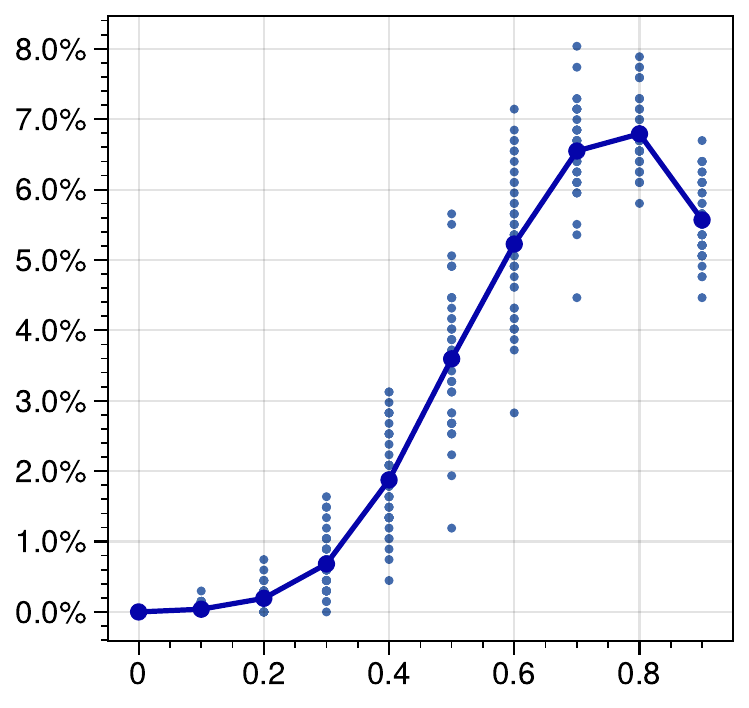}\label{fig:hess_ising_g1_init_large}}
\subfloat[\% of small eigenvalues ($|\lambda| < 0.2$)]{
    \includegraphics[height=4.1cm]{./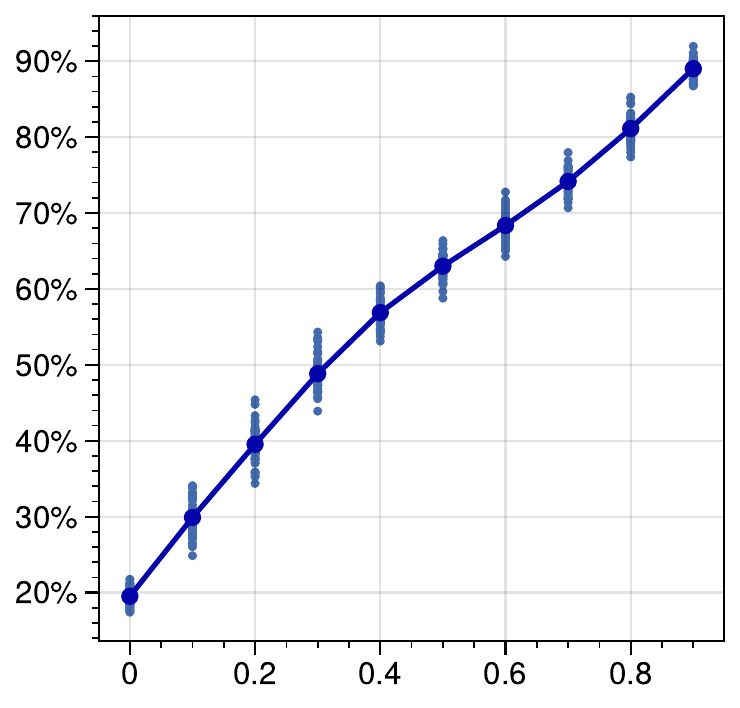}\label{fig:hess_ising_g1_init_small}}
\subfloat[Gradient overlap with $\mathcal{P}_{\text{small}}$]{
    \includegraphics[height=4.1cm]{./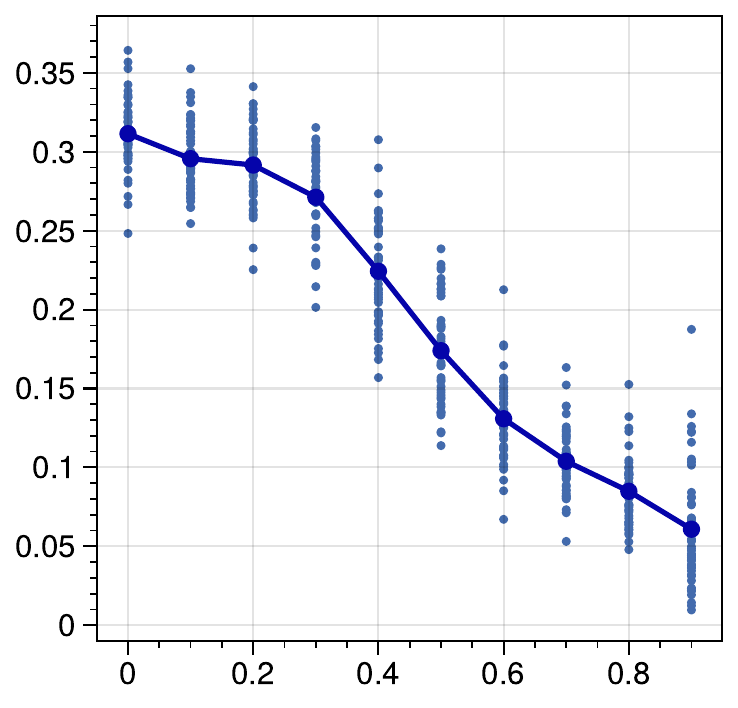}\label{fig:hess_ising_g1_init_overlap}}
    \caption{Characteristic plots for the Hessian eigenspectrum with $n=12$ qubits and $L=56$ layers, based on 50 instances for each probability $p$ to omit the CZ-gates, at randomly initialized circuit parameters.}
    \label{fig:hess_ising_g1_init}
\end{figure*}

Figure~\ref{fig:hess_ising_g1_init_spectrum} visualizes a collection of Hessian eigenspectra evaluated at $50$ independent random circuit parameters for each $p$. Its horizontal axis denotes the eigenvalues, and the vertical axis extends across $500$ sample Hessians distinguished by colors according to their $p$ values. Their top/bottom eigenvalues are also depicted in  Figure~\ref{fig:hess_ising_g1_init_outlier} as a function of $p$. 
The consequence of the inequality \eqref{th1} is apparent in both  Figures~\ref{fig:hess_ising_g1_init_spectrum}~and~\ref{fig:hess_ising_g1_init_outlier}: While the largest absolute eigenvalues rapidly grow as $p \rightarrow 1$, all absolute eigenvalues at $p=0$ are bounded above as $|h_i| < 5$.
Such robust concentration of the Hessian spectrum towards $0$ by moving $p$ closer to $0$ shows that enhanced entangling capability causes a geometric crossover that smooths all the steepest directions in the quantum energy landscape.

Notably, the tighter concentration of Hessian eigenvalues as $p\rightarrow 0$ does not indicate higher degeneracy at zero eigenvalues. Let us count the number of stringent flat directions whose corresponding Hessian eigenvalues satisfy $|h_a| < 0.2$ and denote by $\mathcal{P}_\text{small}$ the flat subspace spanned by them. Figure~\ref{fig:hess_ising_g1_init_small} summarizes, for each different $p$, the average percentage between the dimensionality of $\mathcal{P}_\text{small}$ and that of the entire parameter space. Initially at $p=0$, the flat subspace $\mathcal{P}_\text{small}$ makes up only 20\% of the full dimensionality of the circuit parameter $\bm\theta$. This percentage steadily increases as $p$ moves towards $1$, i.e., reducing the entangling capability of the circuit. It shows that the energy landscape of low-entangling circuits is surprisingly similar to that of over-parameterized systems where most of the variables are used up to parameterize the flat directions. We will observe this geometric resemblance also in the local geometry of intermediate points on the VQA parameter trajectory, serving as the ground for the optimization efficiency of low-entangling circuits. See \cite{liu2021loss} for how efficiently the gradient descent performs the energy optimization in over-parameterized systems.

It is also informative to examine how aligned an initial gradient vector ${\nabla} {\cal L}(\bm\theta)$ is with $\mathcal{P}_\text{small}$. For each different $p$, we estimate the overlap by projecting the normalized gradient onto the subspace $\mathcal{P}_\text{small}$ and computing the norm, written as
\begin{align}
    \frac{1}{\Vert{{\nabla} {\cal L}(\bm\theta)}\Vert} \sqrt{\textstyle\sum_{v \in \mathcal{P}_\text{small}}\big(v\cdot {\nabla} {\cal L}(\bm\theta)\big)^2},
    \label{eq:overlap_psmall}
\end{align}
where the sum is over the orthonormal basis of the space $\mathcal{P}_\text{small}$.
Figure~\ref{fig:hess_ising_g1_init_overlap} exhibits that the lower the entangling capability of the circuit, the smaller the overlap between ${\nabla} {\cal L}(\bm\theta)$ and  $\mathcal{P}_\text{small}$ despite higher dimensionality of $\mathcal{P}_\text{small}$.
It is another characteristic of the low-entangling circuits, 
which is also found in over-parameterized classical deep learning systems \cite{fort2019emergent, gurari2018gradient},
contributing to the fast convergence of the gradient descent minimization of the energy. 

\subsection{Optimization Trajectory}
Initial circuit states generated at random parameters, $\bm\theta_0 \sim \mathcal{U}(0,2\pi)^{\otimes nL}$, can typically be well described by the average characteristics of the quantum energy landscape  studied so far. We now turn to investigate the properties of those intermediate circuit states $|\psi(\bm\theta_\tau)\rangle$ obtained after $\tau$ steps of the gradient descent update. 

Unlike the initial circuit energy $\mathcal{L}(\bm\theta_0)$ that cannot differ much from the ensemble average \eqref{eq:bp-energy-mean}, the energy $\mathcal{L}(\bm\theta_\tau)$ at an intermediate time $\tau$ should significantly deviate almost by definition \eqref{gdm} of the steepest descent method.
It indicates how distinctive the intermediate states $|\psi(\bm\theta_\tau)\rangle$ are from initial states, thus requiring independent exploration of their geometric properties.


\begin{figure*}[t]
\centering
\subfloat[Top/bottom eigenvalues]{
    \includegraphics[height=4.1cm]{./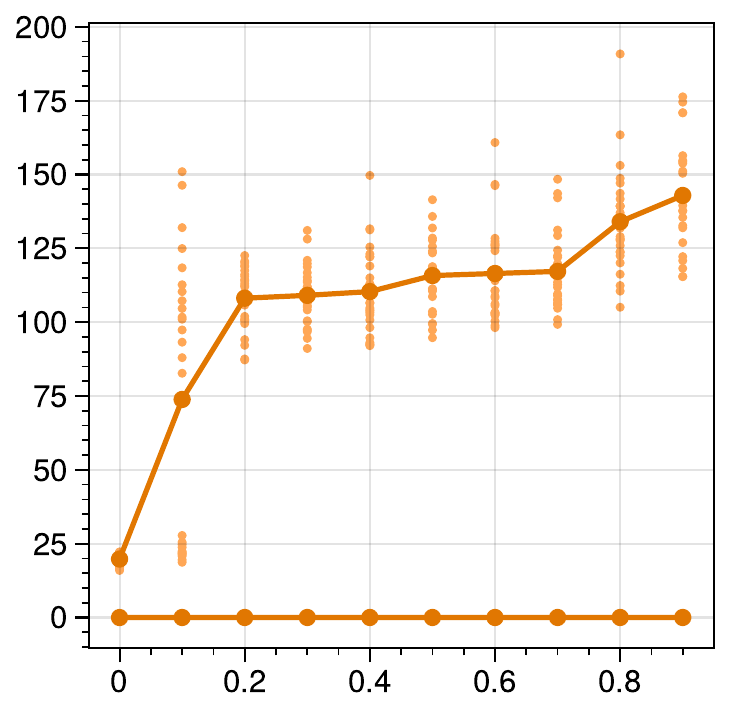}\label{fig:hess_ising_g1_final_topbottom}}
    \subfloat[\% of large eigenvalues ($|\lambda| > 25$)]{
    \includegraphics[height=4.1cm]{./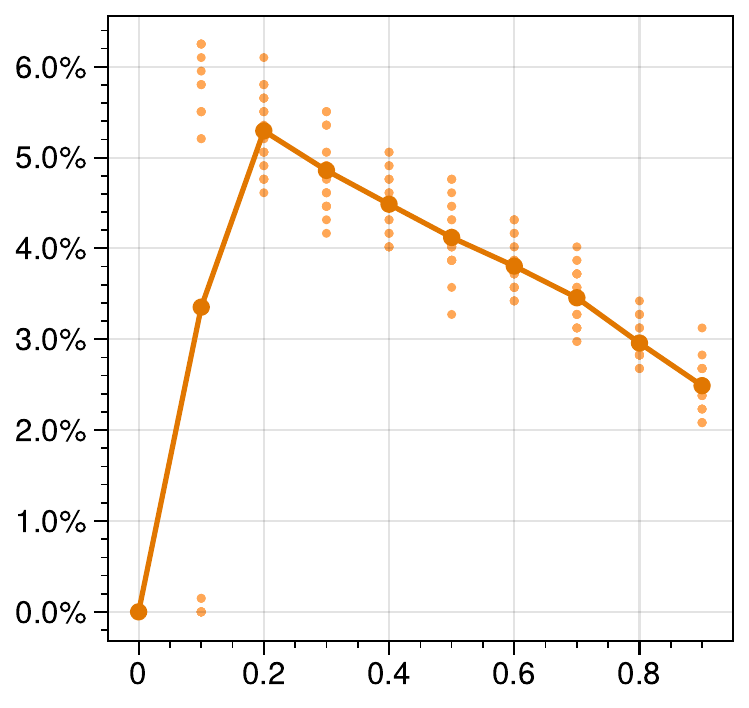}\label{fig:hess_ising_g1_final_large}}
\subfloat[\% of small eigenvalues ($|\lambda| < 0.2$)]{
    \includegraphics[height=4.1cm]{./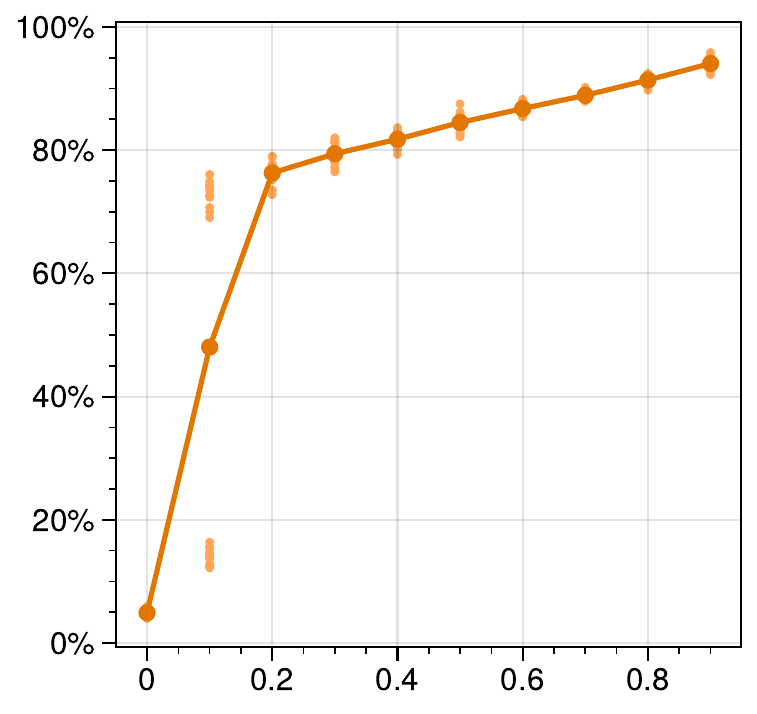}\label{fig:hess_ising_g1_final_small}}
\subfloat[Gradient overlap with $\mathcal{P}_{\text{small}}$]{
    \includegraphics[height=4.1cm]{./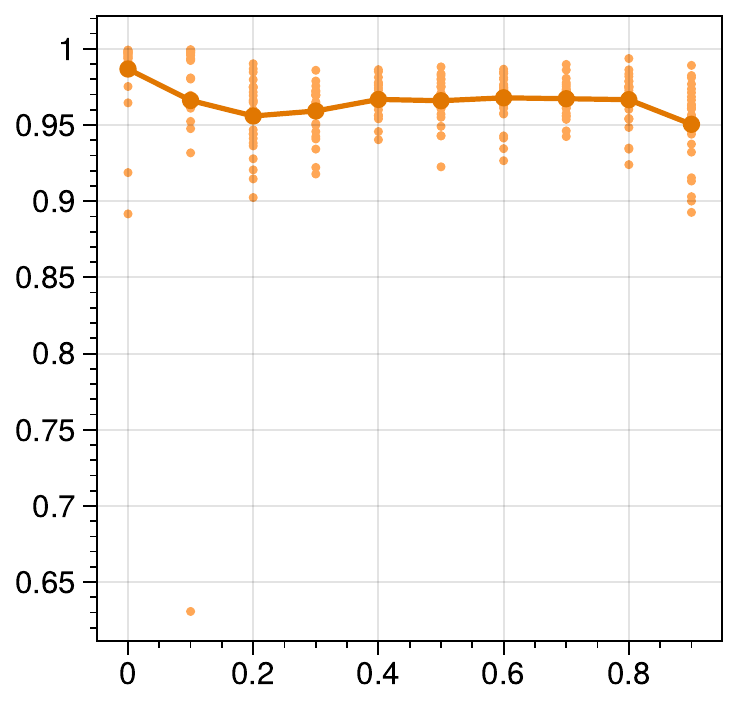}\label{fig:hess_ising_g1_final_overlap}}
    \caption{Characteristic plots for the Hessian eigenspectrum with $n=12$ qubits and $L=56$ layers, based on 50 VQA instances for each probability $p$ to omit the CZ-gates, after 5000 steps of the parameter update.}
    \label{fig:hess_ising_g1_final}
\end{figure*}

\subsubsection{Optimization Rate}

The gradient descent aims to solve the task of minimizing the energy function by reaching $\bm\theta^* = \arg\min_{\bm\theta}\mathcal{L}(\bm\theta)$ through the iterative and discrete parameter updates,
\begin{align}
\begin{split}
\mathbf{v}_{\tau+1} &= \beta \mathbf{v}_{\tau} + \eta \nabla {\cal L}(\bm\theta_\tau) \ ,\\
\bm\theta_{\tau+1} &= \bm\theta_\tau +  \mathbf{v}_{\tau+1} \ ,
\end{split}.
\label{gdm}
\end{align}
where we set the learning rate $\eta$ and momentum coefficient $\beta$ to be $(\eta, \beta) = (0.9, 0.01)$ throughout all numerical experiments in the paper. Taking the continuum limit, (\ref{gdm}) turns into the gradient flow equation, written as \cite{DBLP:journals/corr/abs-2012-04728}
\begin{equation}
(1-\beta) \frac{d \theta_a}{d \tau}  = - \partial_a  \mathcal{L}(\theta) \ .
\label{gm}
\end{equation}
Any operator in the system shows no explicit dependence on $\tau$. Therefore, the chain rule implies that
\begin{equation}
 (1-\beta)\frac{d}{d \tau} = (1-\beta)\sum_{a=1}^{nL} \frac{d\theta_a}{d\tau}\frac{\partial}{\partial \theta_a} = -(\nabla \mathcal{L}(\bm\theta) \cdot \nabla) \ .
\end{equation}
As a consequence, the average rate of an operator ${\cal O}$ along the optimization trajectory is
\begin{align}
    \mathbb{E}_{\bm\theta}\bigg[\frac{d {\cal O}}{d \tau}\bigg] =  \mathbb{E}_{\bm\theta}\bigg[\frac{\nabla {\cal L}(\bm\theta)\cdot \nabla {\cal O}}{\beta-1} \bigg] =
    \mathbb{E}_{\bm\theta}\bigg[ \frac{ {\cal O}\nabla^2 {\cal L}(\bm\theta)}{1-\beta} \bigg].
\end{align}
Note that the last equality is obtained after the integration by parts, where the averaging integral over the compact space $\bm\theta\sim T^{nL}$ cannot produce a boundary term.



\begin{theorem}
The following  statements hold along  the optimization trajectory $\{\bm\theta_\tau\}_{\tau}$ parameterized by discrete integer steps $\tau$:
\\

\noindent{\normalfont(i)} The optimization rate satisfies:
\begin{equation}
\Pr\left(\left\vert\frac{d \bm\theta}{d \tau}\right\vert \geq c \right) \leq   \frac{\Var_{\bm\theta}(\nabla {\cal L}(\bm\theta))}{c^2(1-\beta)^2 }
\end{equation}

\noindent{\normalfont(ii)} The average rate of the energy function is:
\begin{equation}
\mathbb{E}_{\bm\theta} \bigg[\frac{d {\cal L}(\bm\theta)}{d \tau}\bigg] =  -\frac{1}{1-\beta}\sum_{a=1}^{nL} \Var_{\bm\theta}(\partial_a {\cal L}(\bm\theta)) \ .
\label{th3}
\end{equation}

\noindent{\normalfont(iii)} The average rate of the energy gradient vanishes:
\begin{equation}
\mathbb{E}_{\bm\theta} \left(\frac{d  \nabla{\cal L}(\bm\theta)}{d \tau}\right) = 0 \ .
\label{th4}
\end{equation}

\noindent{\normalfont(iv)} The average rates of the Hessian and the higher-order energy derivatives $T_{a_1\cdots a_k}(\bm\theta) = \partial_{a_1} \cdots\partial_{a_k}\mathcal{L}(\bm\theta)$  are:
\begin{align}
\mathbb{E}_{\bm\theta} \bigg[\frac{d H_{ab}(\bm\theta)}{d \tau}\bigg] &=    \,\mathbb{E}_{\bm\theta}\bigg[\frac{\sum_{c=1}^{nL}H_{ac}(\bm\theta)H_{cb}(\bm\theta)}{1-\beta}\bigg]  \label{th5}\\
\mathbb{E}_{\bm\theta} \bigg[\frac{d T_{a_1\cdots a_k}(\bm\theta)}{d \tau}\bigg] &=   \mathbb{E}_{\bm\theta}\bigg[\frac{\sum_{c=1}^{nL}T_{a_1\cdots a_{k-1}c}(\bm\theta)H_{ca_k}(\bm\theta)}{1-\beta}\bigg]  \label{th5-high}
\end{align}


\end{theorem}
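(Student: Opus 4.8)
\emph{Proof strategy.} The plan is to obtain all four statements from a single device already assembled above: the continuum gradient-flow relation \eqref{gm} and its chain-rule consequence $(1-\beta)\,\frac{d\mathcal{O}}{d\tau} = -\sum_{c=1}^{nL}(\partial_c\mathcal{L})(\partial_c\mathcal{O})$, valid pointwise for any smooth observable $\mathcal{O}(\bm\theta)$, together with two inputs from the earlier sections: (a) $\mathbb{E}_{\bm\theta}$ is the average over the uniform measure on the compact torus $T^{nL}$, so integration by parts in any $\theta_a$ produces no boundary term and the $\theta_a$-average of a total derivative vanishes; and (b) every derivative of the energy has vanishing mean, $\mathbb{E}_{\bm\theta}[\partial_a\mathcal{L}] = \mathbb{E}_{\bm\theta}[\partial_a\partial_b\mathcal{L}] = \cdots = 0$, by \eqref{eq:bp-mean}. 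Smoothness and periodicity of $\mathcal{L}$, which legitimize these manipulations, follow from its finite trigonometric-polynomial form implied by \eqref{L}.

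For (i) I would note that \eqref{gm} gives the pointwise equality $\big|\frac{d\bm\theta}{d\tau}\big| = \frac{1}{1-\beta}\,\|\nabla\mathcal{L}(\bm\theta)\|$ (with $0<\beta<1$), so the claim is just a concentration bound for $\|\nabla\mathcal{L}\|$ at a uniformly random point $\bm\theta$; applying Markov's inequality to the nonnegative variable $\big|\frac{d\bm\theta}{d\tau}\big|^2$ and using $\mathbb{E}_{\bm\theta}[(\partial_a\mathcal{L})^2] = \Var_{\bm\theta}(\partial_a\mathcal{L})$ (the mean being zero, so that $\Var_{\bm\theta}(\nabla\mathcal{L})$ is read as $\sum_a\Var_{\bm\theta}(\partial_a\mathcal{L}) = \mathbb{E}_{\bm\theta}\|\nabla\mathcal{L}\|^2$) delivers the stated bound. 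For (ii) I would put $\mathcal{O}=\mathcal{L}$ into the chain-rule identity, giving $\mathbb{E}_{\bm\theta}[\frac{d\mathcal{L}}{d\tau}] = -\frac{1}{1-\beta}\mathbb{E}_{\bm\theta}\|\nabla\mathcal{L}\|^2 = -\frac{1}{1-\beta}\sum_{a}\Var_{\bm\theta}(\partial_a\mathcal{L})$, the last step again using the vanishing-mean identity; this is \eqref{th3}.

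Statements (iii) and (iv) I would handle in one stroke. With $T_{a_1\cdots a_k} = \partial_{a_1}\cdots\partial_{a_k}\mathcal{L}$, the chain-rule identity applied to $\mathcal{O}=T_{a_1\cdots a_k}$ reads $(1-\beta)\,\frac{d}{d\tau}T_{a_1\cdots a_k} = -\sum_c T_{a_1\cdots a_k c}\,(\partial_c\mathcal{L})$. Factoring $T_{a_1\cdots a_k c} = \partial_{a_k}T_{a_1\cdots a_{k-1}c}$, taking $\mathbb{E}_{\bm\theta}$, and integrating by parts in the single variable $\theta_{a_k}$ moves the freed $\partial_{a_k}$ onto $\partial_c\mathcal{L}$ to form $H_{ca_k}$ with no boundary contribution, yielding $\mathbb{E}_{\bm\theta}[\frac{d T_{a_1\cdots a_k}}{d\tau}] = \frac{1}{1-\beta}\mathbb{E}_{\bm\theta}\big[\sum_c T_{a_1\cdots a_{k-1}c}H_{ca_k}\big]$; this is \eqref{th5-high}, and \eqref{th5} is its $k=2$ specialization using $H_{bc}=H_{cb}$. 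For (iii), the $k=1$ case, the contraction $\sum_c(\partial_c\mathcal{L})(\partial_c\partial_a\mathcal{L}) = \tfrac{1}{2}\,\partial_a\|\nabla\mathcal{L}\|^2$ is itself a total $\theta_a$-derivative, so its torus average is zero and $\mathbb{E}_{\bm\theta}[\frac{d\nabla\mathcal{L}}{d\tau}] = 0$, which is \eqref{th4}.

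I expect no genuinely hard step. The two points demanding care are: first, being explicit that these are identities for the continuum flow \eqref{gm} — or, for (i), that the underlying equality is exact and pointwise, so the discretization in \eqref{gdm} is irrelevant — rather than for the discrete iteration itself, which would otherwise require controlling $O(\eta^2)$ step-size corrections; and second, the index bookkeeping in (iv): one must peel off precisely the last derivative $\partial_{a_k}$ from $T_{a_1\cdots a_k c}$ before integrating by parts, since that is what leaves $a_k$ as the surviving Hessian index and reproduces the stated recursion rather than a contracted Laplacian-type term.
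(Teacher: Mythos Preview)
Your proposal is correct and follows essentially the same route as the paper: the gradient-flow identity \eqref{gm} and its chain-rule form, the vanishing-mean input \eqref{eq:bp-mean}, Chebyshev/Markov for (i), the total-derivative observation $\sum_c(\partial_c\mathcal{L})(\partial_c\partial_a\mathcal{L})=\tfrac12\partial_a\|\nabla\mathcal{L}\|^2$ for (iii), and integration by parts on the torus for (iv). Your write-up is in fact more explicit than the paper's in two useful places: you spell out that in (iv) one must integrate by parts in $\theta_{a_k}$ (not in the summed index $c$, which would yield $\mathbb{E}_{\bm\theta}[T_{a_1\cdots a_k}\nabla^2\mathcal{L}]$ rather than the stated $H$-contraction), and you make the vector-norm reading of $\Var_{\bm\theta}(\nabla\mathcal{L})$ in (i) explicit.
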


\begin{proof} (i)  The inequality follows from the gradient flow equation (\ref{gm})
inserted into the Chebyshev inequality:
\begin{equation}
    \Pr( |\partial_a  \mathcal{L}(\bm\theta)| \geq c) \leq \frac{\Var_{\bm\theta}(\partial_a {\cal L}(\theta))}{c^2} \ .
\end{equation}

\noindent(ii) Along the optimization curve,
\begin{equation}
  \frac{d {\cal L}(\bm\theta)}{d \tau} =  \frac{\nabla \mathcal{L}(\bm\theta) \cdot \nabla \mathcal{L}(\bm\theta)}{\beta-1} \ .
  \label{changeL}
\end{equation}
By taking the expectation value $\mathbb{E}_{\bm\theta}$ on both sides of (\ref{changeL}), we arrive at (\ref{th3}) thanks to \eqref{eq:bp-mean} that $\mathbb{E}_{\bm\theta}[\partial_a {\cal L}(\bm\theta)] = 0$. 


\noindent(iii) Along the optimization curve,
\begin{equation}
  \frac{d \partial_{a}{\cal L}(\bm\theta)}{d \tau} 
  = \frac{1}{2} \sum_{b=1}^{nL} \partial_{a} \left(\partial_b {\cal L} (\bm\theta)\right)^2 \ .
  \label{changeLL}
\end{equation}
The RHS of (\ref{changeLL}) vanishes upon taking the expectation value $\mathbb{E}_{\bm\theta}$, as it is a total derivative on the compact torus. 

\noindent (iv) Along the optimization curve,
\begin{align}
\begin{split}
  \frac{d H_{ab}} {d \tau} &=   \frac{ \nabla H_{ab} \cdot \nabla{\cal L}}{\beta-1} \\
  \frac{d T_{a_1\cdots a_k}} {d \tau} &=   \frac{ \nabla T_{a_1\cdots a_k} \cdot \nabla{\cal L}}{\beta-1} \ .
\end{split}
  \label{changeH}
\end{align}
By taking the expectation value $\mathbb{E}_{\bm\theta}$ on both sides of (\ref{changeH}) and doing the integration by parts, we find (\ref{th5}) and \eqref{th5-high}, for which the integral over the torus $\bm\theta\sim T^{nL}$ does not make a boundary contribution.


\end{proof}
Alongside the exponential decay \eqref{eq:bp-vargrad} of $\Var_{\bm\theta}[\partial_a \mathcal{L}(\bm\theta)]$ in highly expressible circuits \cite{McClean2018bp} and the consequent suppression \eqref{r2} of the Hessian elements \cite{cerezo2020hessian_bp}, these theorems \eqref{th3}, \eqref{th5}, and \eqref{th5-high} highlight the issue of trainability at initial steps $\tau$ when circuit-generated states $\{|\psi(\bm\theta_\tau)\rangle\}_{\tau}$ are similar to the average Haar-random states.

\begin{figure*}[t]\centering
\subfloat[Top eigenvalue $\lambda_\text{max}$]{
    \includegraphics[height=4.1cm]{./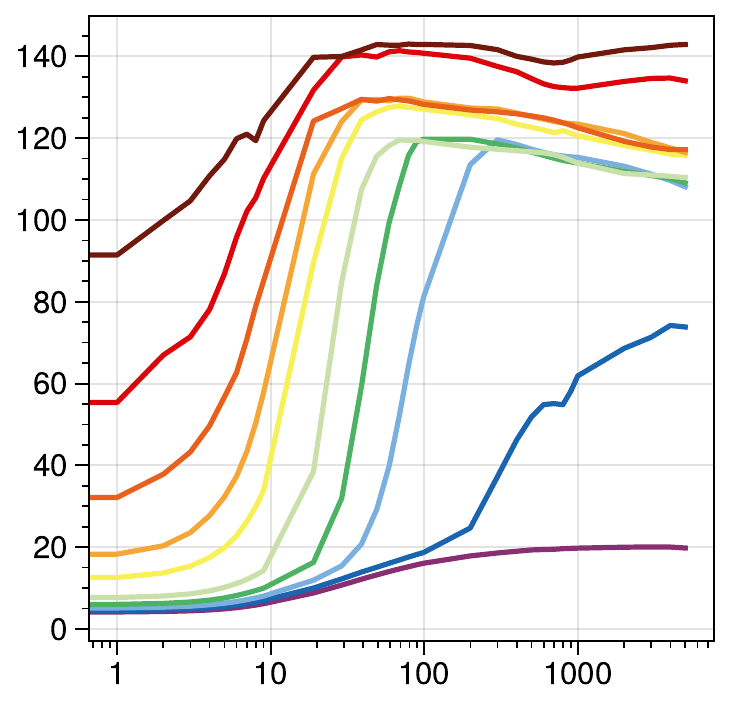}\label{fig:hess_ising_g1_ev_maxeigval}}
\subfloat[Bottom eigenvalue $\lambda_\text{min}$]{
    \includegraphics[height=4.1cm]{./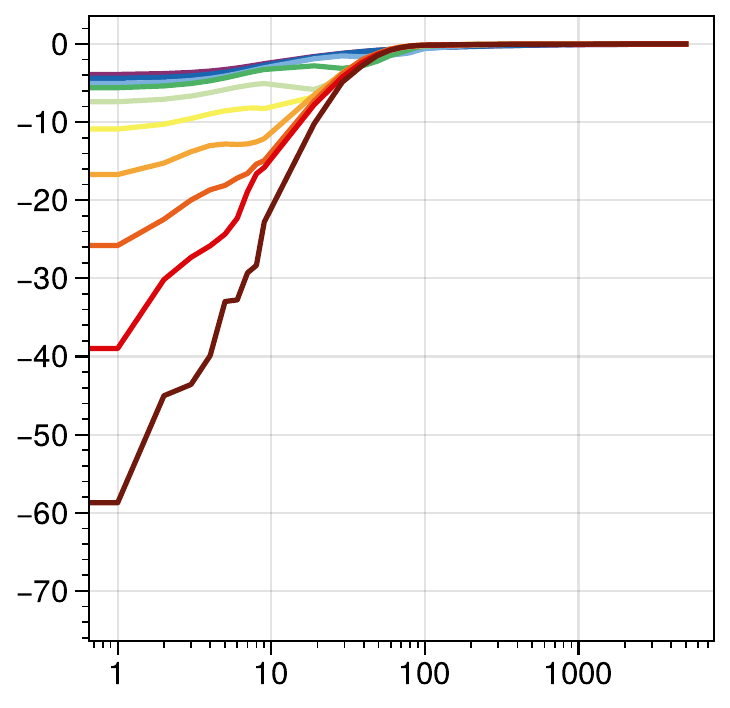}\label{fig:hess_ising_g1_ev_mineigval}}
\subfloat[\% of large eigenvalues ($|\lambda| > 25$)]{
    \includegraphics[height=4.1cm]{./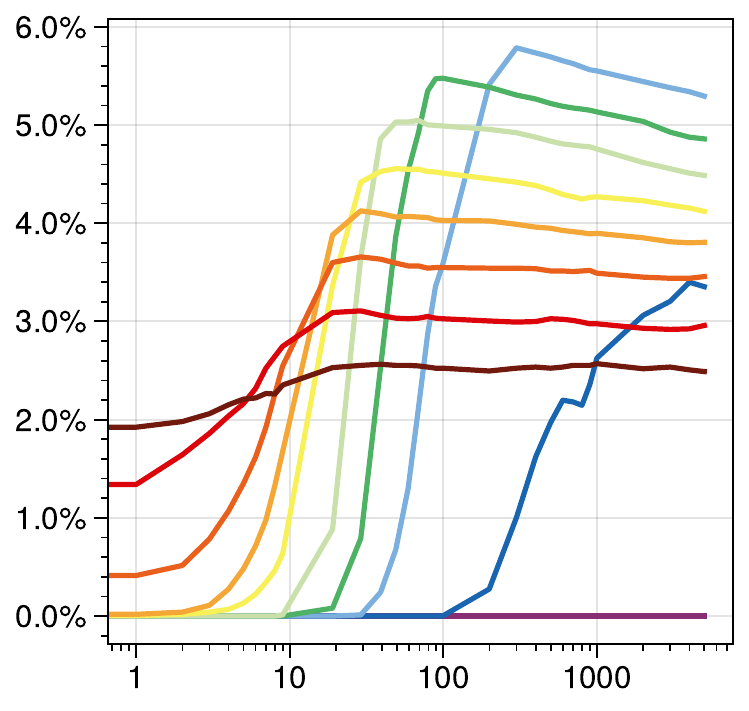}\label{fig:hess_ising_g1_ev_flat_large}}
\subfloat[\% of small eigenvalues ($|\lambda| < 0.2$)]{
    \includegraphics[height=4.1cm]{./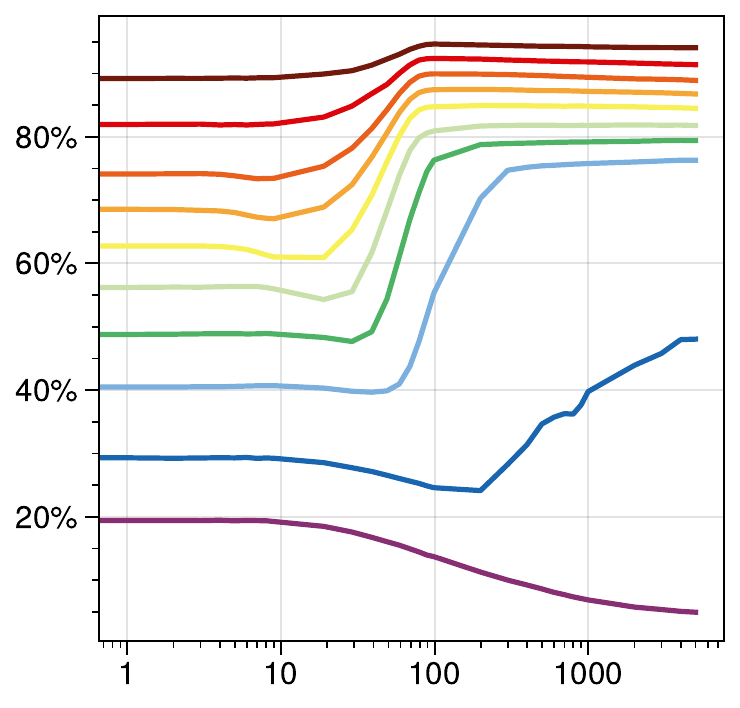}\label{fig:hess_ising_g1_ev_steep_small}}     \\  
\subfloat[Gradient overlap with $\mathcal{P}_{\text{small}}$]{
    \includegraphics[height=4.1cm]{./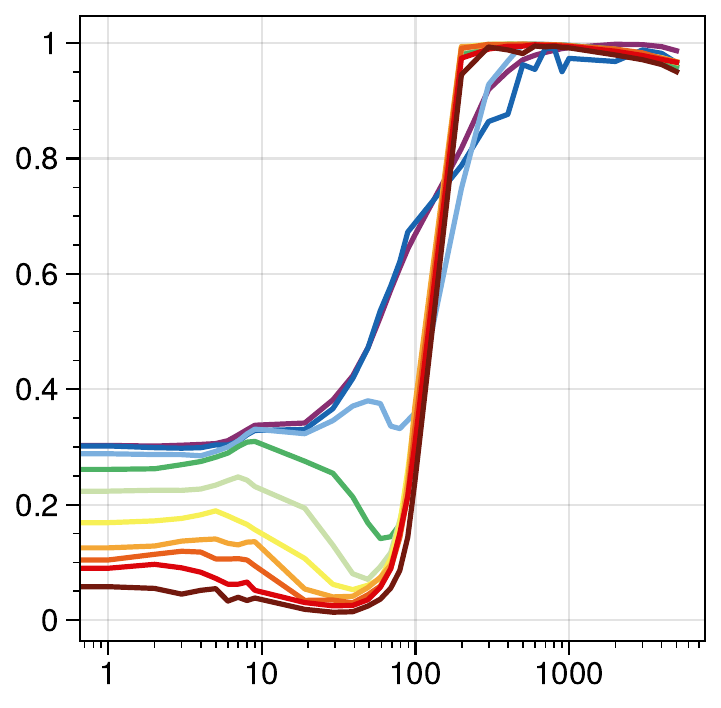}\label{fig:hess_ising_g1_ev_flat_overlap}}
\subfloat[Gradient overlap with $\mathcal{P}_{\text{large}}$]{
    \includegraphics[height=4.1cm]{./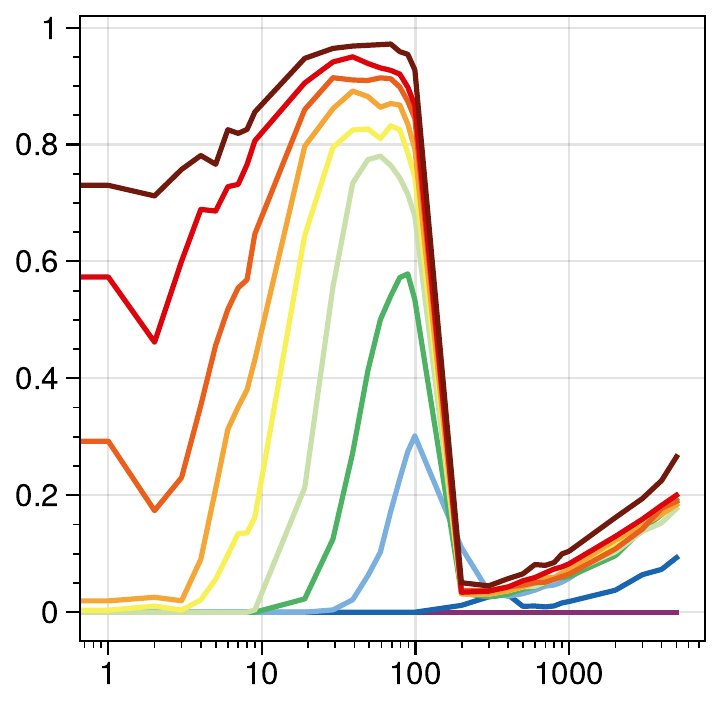}\label{fig:hess_ising_g1_ev_steep_overlap}}       
\subfloat[Energy difference $\Delta E$]{
    \includegraphics[height=4.1cm]{./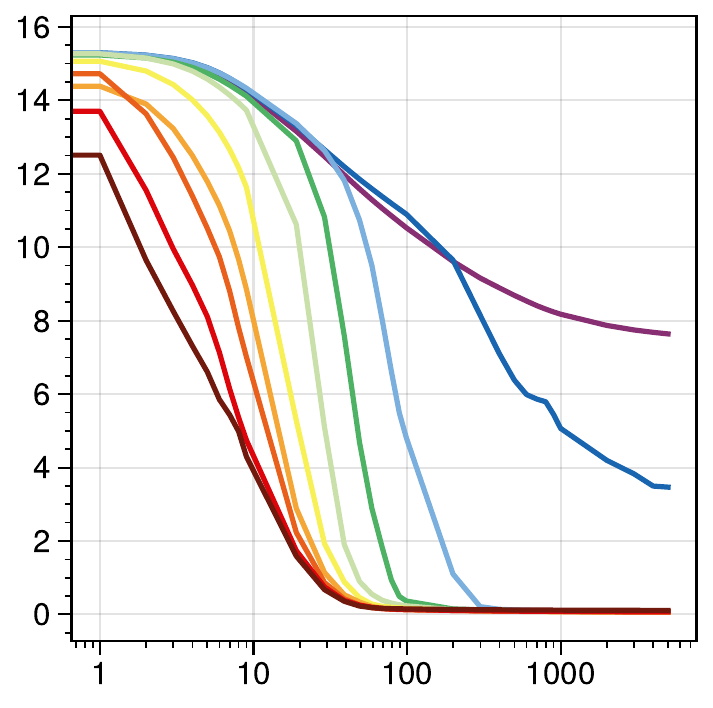}\label{fig:hess_ising_g1_ev_energy}}
\subfloat[Renyi-2 entropy $\mathcal{R}^{(2)}$]{
    \includegraphics[height=4.1cm]{./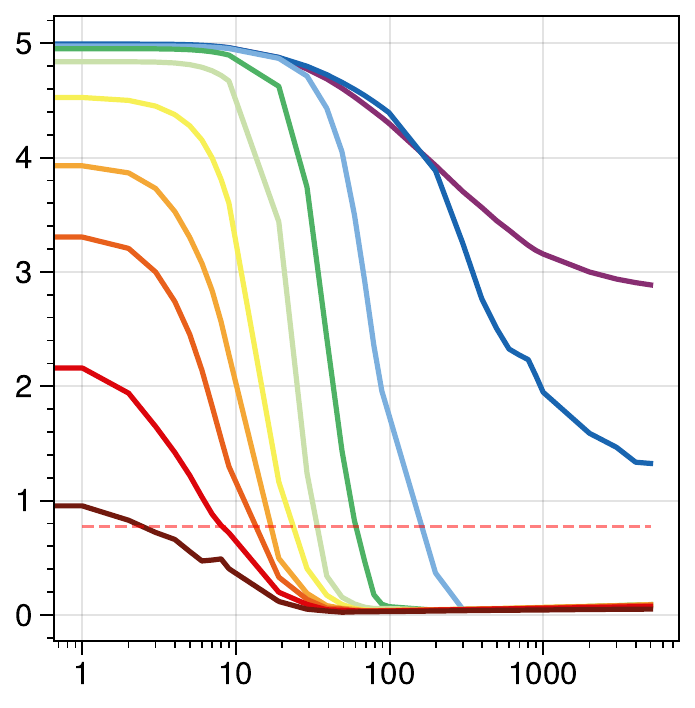}\label{fig:hess_ising_g1_ev_renyi2}}
\quad\ \ \includegraphics[width=17cm]{./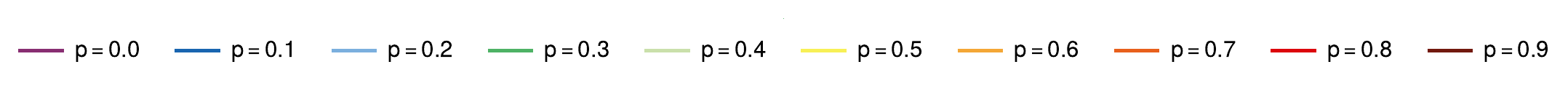}
    \caption{The evolution of the Hessian eigenvalues, the overlap of the energy gradient with the small/large curvature subspaces, the energy gap from the ground state, and the Renyi-2 entropy.
    Their estimation is based on a collection of $26$ Hessian samples for every $0 \leq p < 1$ and $\tau \in \{a\times 10^{b}: a\times 10^{b} \leq 5000 \text{ and } 0 \leq a,b \leq 9\}$.}
    \label{fig:hessian-ev}
\end{figure*}

\subsubsection{Geometry at Endpoints}
We now characterize the local geometry near the optimization endpoint $\bm\theta_{5000}$ by collecting $260$ sample Hessian eigenspectra evaluated after $5000$ steps of the parameter update \eqref{gdm}. They are visualized in Figure~\ref{fig:hess_ising_g1_final_spectrum} at a glance, whose horizontal and vertical directions extend across the spectral values and circuit instances, being distinguished by $10$ different colors according to the $p$ values. Besides, their top/bottom eigenvalues are also plotted in Figure~\ref{fig:hess_ising_g1_final_topbottom} as a function of $p$. Having inspected these spectral data, we make the following observations:

First, all negative eigenvalues in the Hessian spectrum are, if not zero, negligible in their absolute values. This illustrates that the local geometry around $\bm\theta_{5000}$ no longer contains a concave direction because the gradient descent has already converged to a local extremum. 

Second, the Hessian spectrum at $\bm\theta_{5000}$ distributes more widely as the circuit entangling capability decreases, i.e., $p \rightarrow 1$. All the top eigenvalues at $p=0$ are upper bound by $25$, while the top eigenvalues for $p \geq 0.1$ are frequently greater than $100$. Such widespread/concentration of the Hessian spectrum correlated to the entangling capability was also observable at the initial points $\bm\theta_{0}$.

Third, the top Hessian eigenvalues across all $0 \leq p < 1$ can be roughly classified into two clusters, i.e., $h_\text{top} \sim 20$ and  $h_\text{top} \gtrsim 100$, connected to the success/failure of the VQA optimization in approximating the ground state. 
It is a notable distinction from the initial Hessian spectrum where the top eigenvalues gradually increase from $h_\text{top} \sim 5$ to $h_\text{top} \gtrsim 70$ as $p$ approaches to $1$.

One can also infer the geometric structure of the trajectory endpoint $\bm\theta_{5000}$ by examining the percentages of small and large eigenvalues in the Hessian eigenspectrum. We regard a Hessian eigenvalue $h_a$ as small if $|h_a| < 0.2$ and large if $|h_a| > 25$, such that no eigenvalue at $p=0$ can be classified as large. 
The corresponding fractions of large/small eigenvalues are depicted in Figures~\ref{fig:hess_ising_g1_final_large}~and~\ref{fig:hess_ising_g1_final_small}. 
We again find two clusters therein according to the success/failure of VQA samples.

Some circuit instances with high entangling capability, i.e., $p\leq 0.1$, converge to non-optimal extrema, whose local Hessian spectrum contains no large and only a few small values. 
Hence, the shape of non-optimal extrema is mildly convex and nearly isolated, as there are no steep directions and only a handful of flat directions.

In contrast, the local geometry at the endpoints, $\bm\theta_{5000}$, close to the ground-level energy $E(\bm\theta_{5000}) \simeq E_g$ exhibits the dominance of the flat directions, increasing from 70\% to 95\% of the total dimension of circuit parameters as $p$ grows. 
Then the subspace spanned by the large or intermediate eigenvectors makes only up a smaller portion of the parameter space dimensions. 
Accordingly, the optimal minima should
resemble a steep-sided valley with exceedingly high-dimensional flat directions.

\begin{figure*}[t]
\centering
\subfloat[Energy difference $\Delta E$]{
    \includegraphics[height=4.1cm]{./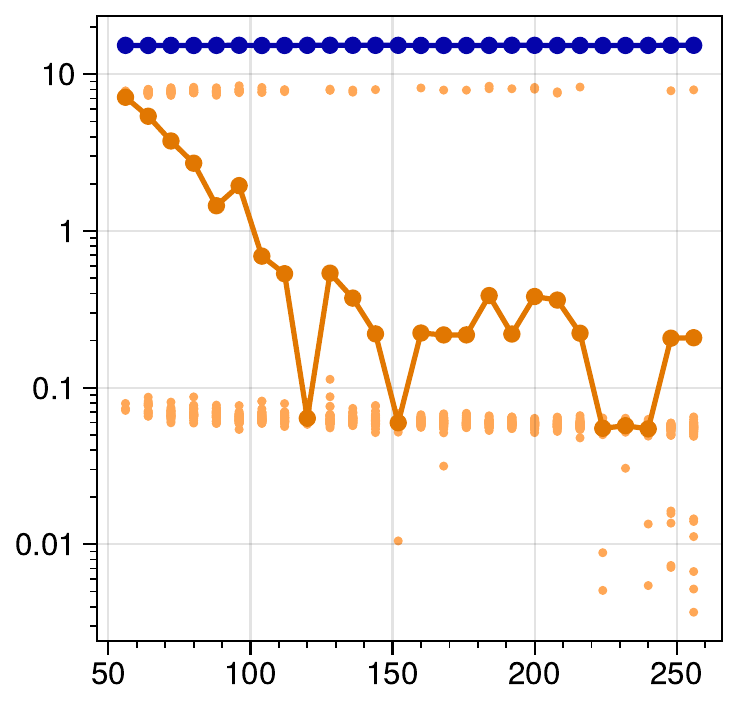}\label{fig:optim_ising_g1_left_param}}
\subfloat[Renyi-2 entropy $\mathcal{R}^{(2)}$]{
    \includegraphics[height=4.1cm]{./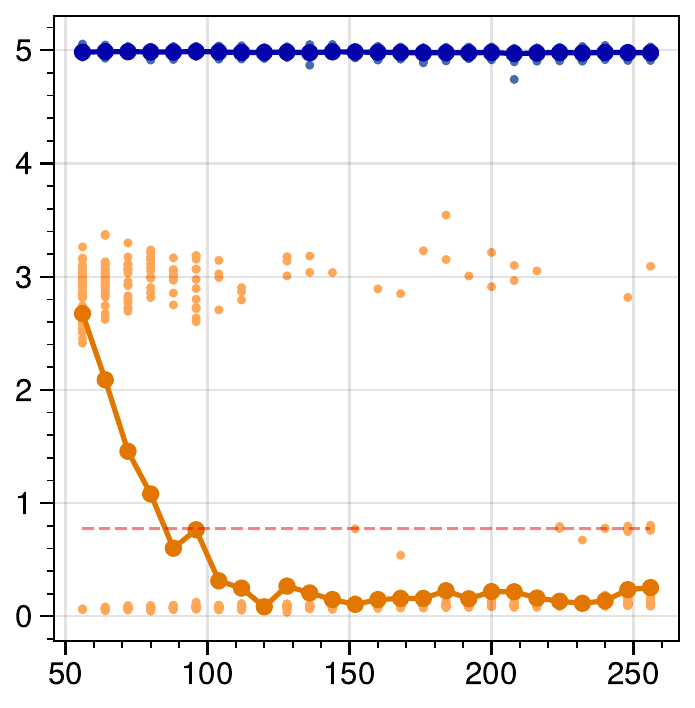}\label{fig:optim_ising_g1_middle_param}}
\subfloat[\% that reaches $\Delta E < 0.1$]{
    \includegraphics[height=4.1cm]{./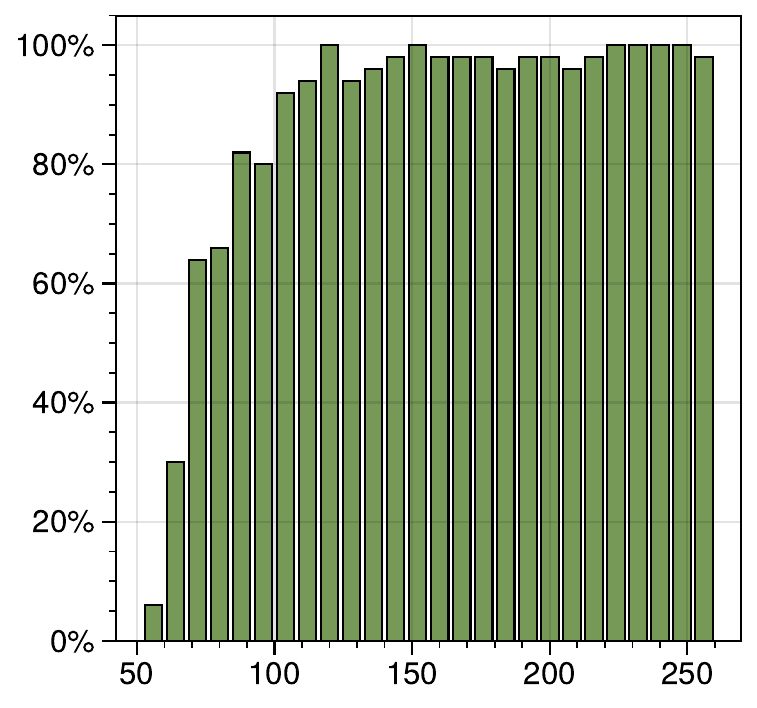}\label{fig:optim_ising_g1_success_param}}
\subfloat[Steps $\tau$ until $\Delta E < 0.1$]{
    \includegraphics[height=4.1cm]{./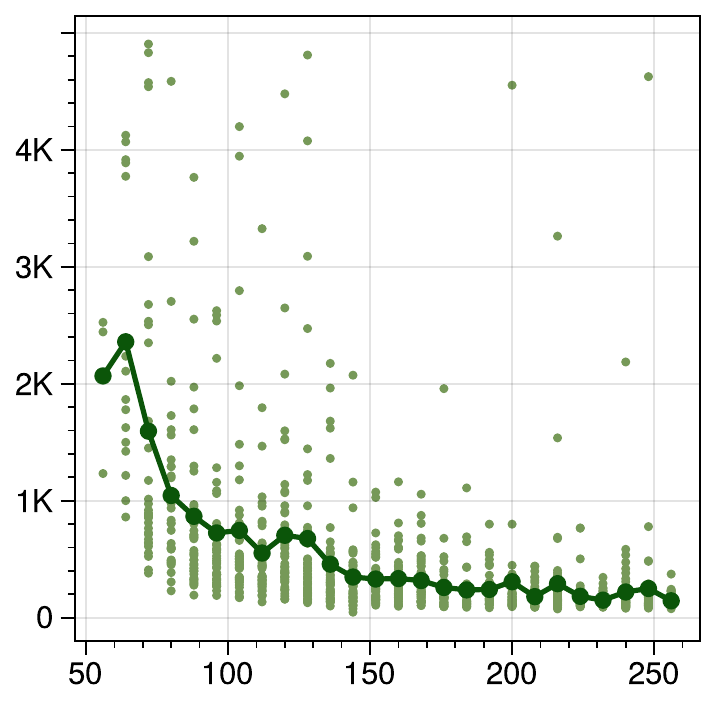}\label{fig:optim_ising_g1_right_param}}
    \caption{Collections of 50 VQA instances to approximate the $n=12$ Ising ground state for each circuit depth $L$ that counts both $56$ entangling and $(L-56)$ rotation layers. Adding control parameters with a fixed amount of the entanglement capability improves the optimization performance. }
    \label{fig:optim_ising_g1_param}
\end{figure*}

We notice the similarity in the geometric structure near the optimization endpoints between the quantum energy landscape with low-entangling capability, i.e., $p\rightarrow 1$, and classical over-parameterized systems such as deep neural networks \cite{sagun2017eigenvalues}. It is tempting to speculate that the variational circuit with lower  entanglement capability can effectively enter the over-parameterized regime with fewer parameters, which explains why the local gradient search can quickly reach the optimal parameter $\bm\theta^*$ that corresponds to the Hamiltonian ground state  $|\psi(\bm\theta^*)\rangle \simeq |\Psi\rangle$ \cite{liu2021loss, highdepth}. 
Somewhat tangentially, the effect of adding more circuit parameters while keeping the same amount of the entangling capability will be considered in Section~\ref{sec:param}.

\subsubsection{Evolution of the Geometry}
We extend the investigation on the Hessian spectrum to various intermediate points along the optimization trajectory $\{\bm\theta_\tau\}_{\tau=1}^{5000}$ at different steps $\tau$.
Figure~\ref{fig:hessian-ev} illustrates the evolution of the local geometry shown through a sample collection of $26$ Hessian eigenspectra for each $0 \leq p < 1$ at $\tau \in \{a\times 10^{b}: a\times 10^{b} \leq 5000 \text{ and } 0 \leq a,b \leq 9\}$.
Besides reconfirming how the entangling capability affects the local energy landscape around initial/final points, we also make the following new observations from that:

First, Figures~\ref{fig:hess_ising_g1_ev_maxeigval}~and~\ref{fig:hess_ising_g1_ev_mineigval} exhibits the abrupt rise of top eigenvalues followed by the slow adjustment and also the regular convergence of bottom eigenvalues to $0$, especially for those successful VQA instances with $\mathcal{L}(\bm\theta_{5000}) \simeq E_g$. 
It corresponds to the rapid movement of $\bm\theta_\tau$ rolling down into an attractor basin and then fine-tuning itself to minimize the energy $\mathcal{L}(\bm\theta_\tau)$ inside the basin.

Second, Figures~\ref{fig:hess_ising_g1_ev_maxeigval} shows that the surge of top eigenvalues occurs sooner with the lower entangling capability. For example, the average number of steps $\tau$ until the rise of $h_{\text top}$ reduces from a few hundreds down to a few tens as $p$ grows from $0.2$ to $0.7$. 
We remark that the rapid convergence of the gradient descent is another resemblance between the low-entangling circuits and over-parametrized systems \cite{liu2021loss}.

Third, Figures~\ref{fig:hess_ising_g1_ev_flat_large}~and~\ref{fig:hess_ising_g1_ev_steep_small} show the gradual crossover of the landscape geometry into either a steep canyon or a convex bowl along the optimization trajectory, depending on whether the energy at the convergence point $\mathcal{L}(\bm\theta_{5000})$ is sufficiently close to or far from the ground energy $E_g$.

Fourth, Figures~\ref{fig:hess_ising_g1_ev_flat_overlap}~and~\ref{fig:hess_ising_g1_ev_steep_overlap} display a sharp transition in the alignment of the gradient vector $\nabla\mathcal{L}(\bm\theta_\tau)$ through \begin{align}
    \frac{1}{\Vert{{\nabla} {\cal L}(\theta)}\Vert} \sqrt{\textstyle\sum_{v \in \mathcal{P}_\text{small/large}}\big(v\cdot {\nabla} {\cal L}(\theta)\big)^2},
\end{align}
which is the norm of the unit gradient projected onto the subspaces $\mathcal{P}_\text{small/large}$ spanned by the small/large Hessian eigenvectors. The parameter update only makes  $\nabla\mathcal{L}(\bm\theta_\tau)$ more aligned with $\mathcal{P}_\text{large}$ during the initial stage. Afterwards, the gradient overlap with $\mathcal{P}_\text{small/large}$ surges/drops  suddenly around the transition point $\tau_t \simeq 100$, indicating the initial phase of the gradient descent is over, and the parameter $\bm\theta_{\tau \geq \tau_t}$ has been already confined to an attractor basin. We comment that such existence of two phases means the gradient descent is not always aligned with the subspace $\mathcal{P}_\text{large}$, not as for the deep neural networks \cite{gurari2018gradient}.

\section{Control Parameters}
\label{sec:param}
This section studies the effect of the number of control parameters on the quantum energy landscape and VQA performance. To run controlled experiments with a fixed degree of the entangling capability, 
we will always start with the $L^*=56$ circuits in Figure~\ref{fig:circuit_diagram} and introduce extra parameters by randomly sandwiching between the $L^*$ layers $n$ copies of the Pauli-$y$ rotation gate acting on every qubit  $(L - L^*)$ times. The total number of circuit parameters is therefore $nL = nL^* + n(L - L^*)$. Notice that the additional parameters are redundant as Pauli-$y$ rotations can commute, i.e., 
\begin{align}
\begin{split}
    R_{y,i}(\varphi_1) R_{y,i}(\varphi_2) &=R_{y,i}(\varphi_2) R_{y,i}(\varphi_1) \\&= R_{y,i}(\varphi_1 +\varphi_2)
\end{split}
\label{eq:red}
\end{align}

To see if the enlarged parameter space significantly impacts the VQA performance, we start with $50$ random circuit instances and minimize the mean energy $\mathcal{L}(\bm\theta)$ of the Ising Hamiltonian \eqref{eq:ising} by the gradient descent method. All the VQA optimization results for $n=12$ qubits are summarized in Figure~\ref{fig:optim_ising_g1_param} as a function of the total number of layers $L$. We check that more control parameters, despite their redundancy \eqref{eq:red}, can still facilitate the local gradient search of the optimal parameters: A cluster of orange dots near $\Delta E \sim 9$, which represents unsuccessful attempts in reaching the ground-level energy $E_g$, becomes less populated as $L$ increases. Getting deeper not only increases the rate of optimization success, defined by $\Delta E < 0.1$, but also reduces the parameter update steps $\tau$ required for it. Moreover, with exceedingly many parameters, such as $L \gtrsim 200$, the variational circuit can often achieve a high precision approximation that even satisfies $\Delta E \lesssim 10^{-2}$ and $\Delta \mathcal{R}^{(2)} \sim 0$ \cite{highdepth}.

Having found the redundant parameters can positively impact the VQA optimization performance, we look into the geometric changes of the quantum energy landscape driven by increasing the parameter space dimension. Let us compute sample collections of $26$ Hessian eigenspectra for each $L \in\{56, 64, \cdots, 112, 120, 144, 168\}$ before/after $5000$ steps of the gradient descent update. Figure~\ref{fig:hess_ising_g1_spectrum_param} show all the Hessian spectra at a glance, whose horizontal and vertical axes extend over the values and circuit instances, colored differently according to their $L$ values. Some supplementary characteristic curves on the sample Hessian spectra are also presented in Figures~\ref{fig:hess_ising_g1_init_param}~and~\ref{fig:hess_ising_g1_final_param}, where the upper bounds $5$ and $25$ of absolute eigenvalues at $L=56$ are respectively referred to distinguish large eigenvalues before and after the optimization.

The main observation that follows from the data is that the overall patterns of Figures~\ref{fig:hess_ising_g1_spectrum}--\ref{fig:hess_ising_g1_final} and Figures~\ref{fig:hess_ising_g1_spectrum_param}--\ref{fig:hess_ising_g1_final_param} are analogous, demonstrating the qualitative similarity between reducing entanglement capability and adding control parameters. 
The low-entangling circuits behave under the gradient descent as if they are over-parameterized since they represent only a limited subset in the Hilbert space.
Hence, the implied principle for designing the variational circuit is to avoid both the high entangling capability and over-parameterization if the VQA task involves searching for low-entangled target states. One should rely on the highly-expressible over-parameterized quantum circuits otherwise.


\begin{figure}[t]
    \centering
    \subfloat[$\tau=0$]{
    \includegraphics[height=4.1cm]{./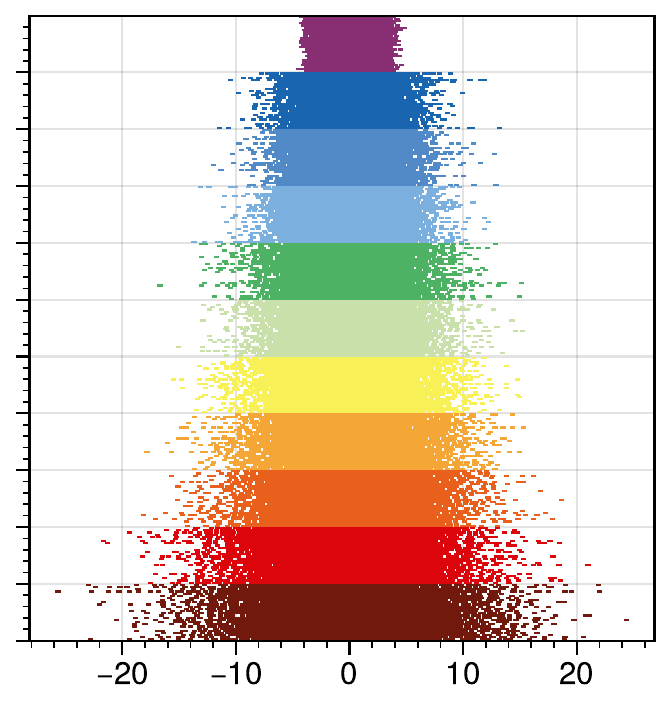}\label{fig:hess_ising_g1_init_spectrum_param}}
    \subfloat[$\tau=5\times 10^3$]{
    \includegraphics[height=4.1cm]{./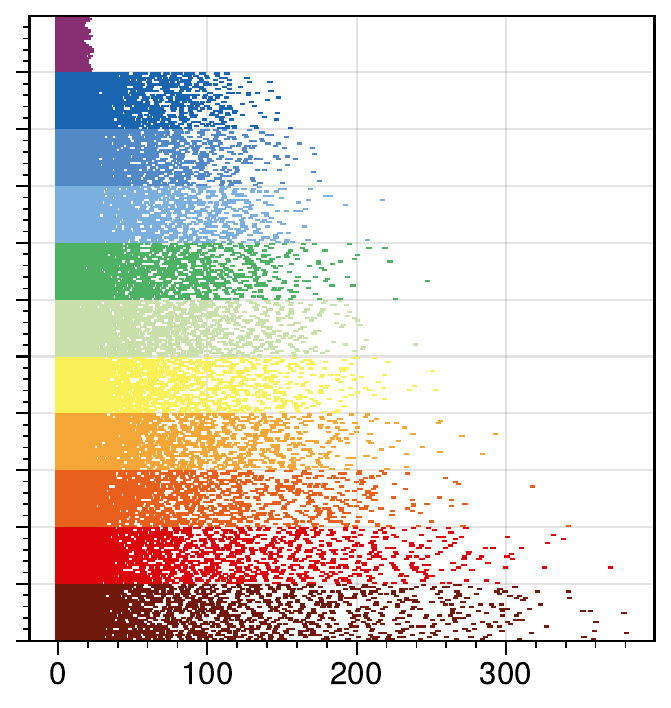}\label{fig:hess_ising_g1_final_spectrum_param}}\\
    \includegraphics[width=\textwidth]{./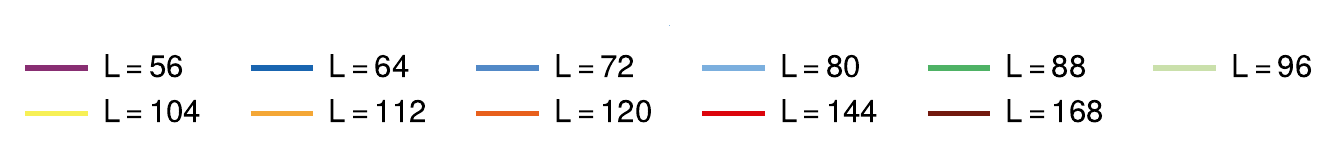}
    \caption{A visualization of 26 sample Hessian eigenspectra, after taking the gradient descent update $\tau$ times, for each circuit depth $L$ that counts both $56$ entangling and $(L-56)$ rotation layers. It looks qualitatively similar to Figure~\ref{fig:hess_ising_g1_spectrum}.}
    \label{fig:hess_ising_g1_spectrum_param}
\end{figure}

\section{Discussion}
\label{sec:discussion}

Throughout this work, we addressed the design principles for quantum variational circuits by proving some theorems and conducting systematic experiments. We demonstrated how the circuit entanglement and the parameter space dimension affect the local geometry of the energy landscape and thus the VQA optimization performance. Our central object of study was the Hessian of the energy function in the parameter space, which shows the curvature eigenspectrum, evaluated at a random initial point and along the optimization trajectory. 

Several analyses on the landscape geometry illustrated that the efficiency of the low-entangling circuits under the VQA optimization is related to their resemblance with an over-parameterized system, despite having a relatively small number of variables. Fewer parameters than $2^n$ may be sufficient to parameterize a subset of Hilbert space that the circuit with limited expressibility can represent. This leaves us with the following question: How many circuit parameters are optimal for a given amount of entangling capability under the gradient descent.\footnote{An analysis of \cite{Funcke_2021} that identifies the parameter redundancy can be useful. See also \cite{Sim_2019,hubregtsen2020evaluation,Rasmussen_2020} for expressibility-related discussions.}

Another important question unexplored in this paper is the effect of the Hamiltonian on the  energy landscape and optimization accuracy. We recall that the low-entangling circuits are not successful in simulating volume-law entangled ground states \cite{Kim:2021ffs}, whose  entanglement scaling is actually determined by the Hamiltonian. Generally, we would like to characterize how certain defining properties of the Hamiltonian, e.g., locality or degree of spin interactions, can steepen/flatten the energy landscape and thus influence the VQA performance.

Finally, we would like to understand how various types of noise can change the quantum energy landscape \cite{barron2020measurement,zeng2021simulating,fontana2020optimizing}, which may lead to suitable error mitigation schemes for noisy VQA optimization.

\begin{figure*}[t]
\centering
\subfloat[Top/bottom eigenvalues]{
    \includegraphics[height=4.1cm]{./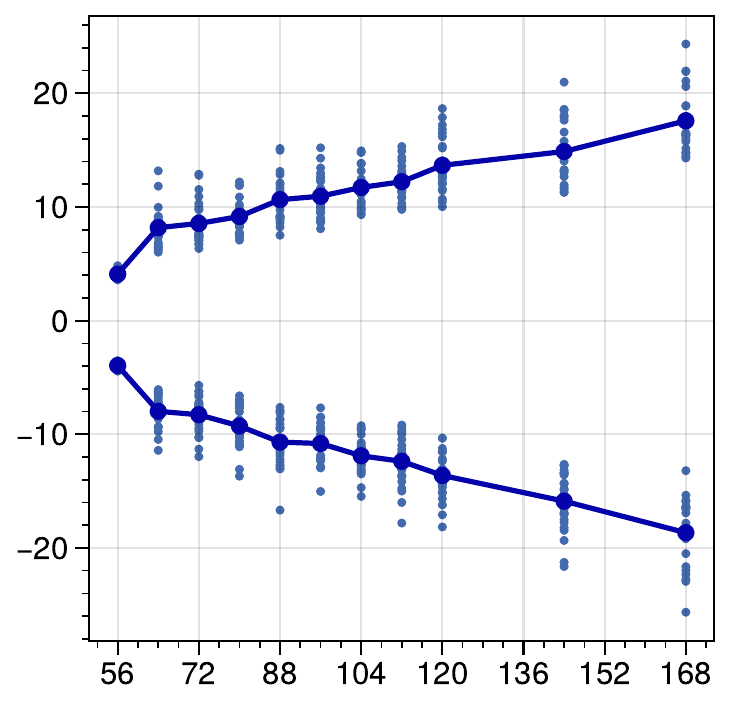}\label{fig:hess_ising_g1_init_outlier_param}}
\subfloat[\% of large eigenvalues ($|\lambda| > 5$)]{
    \includegraphics[height=4.1cm]{.//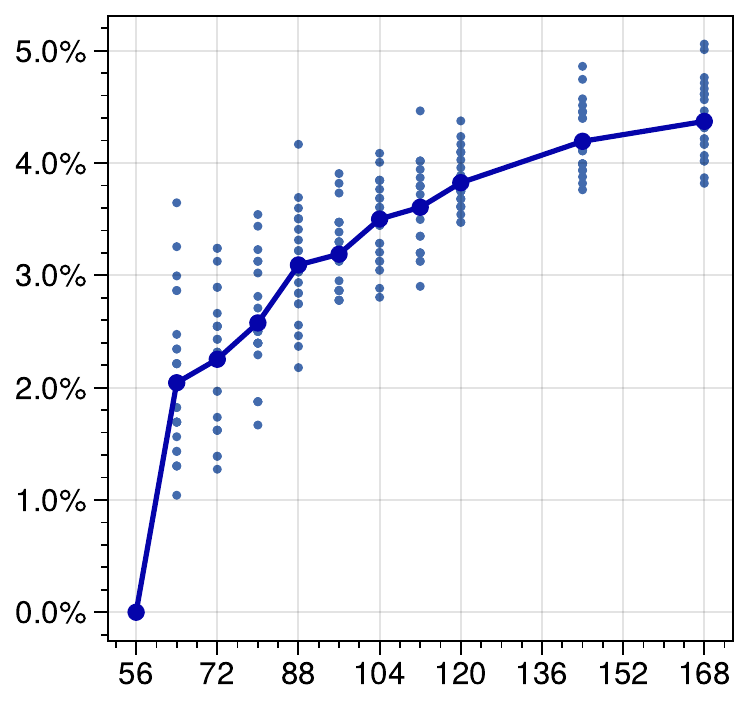}\label{fig:hess_ising_g1_init_large_param}}
\subfloat[\% of small eigenvalues ($|\lambda| < 0.2$)]{
    \includegraphics[height=4.1cm]{./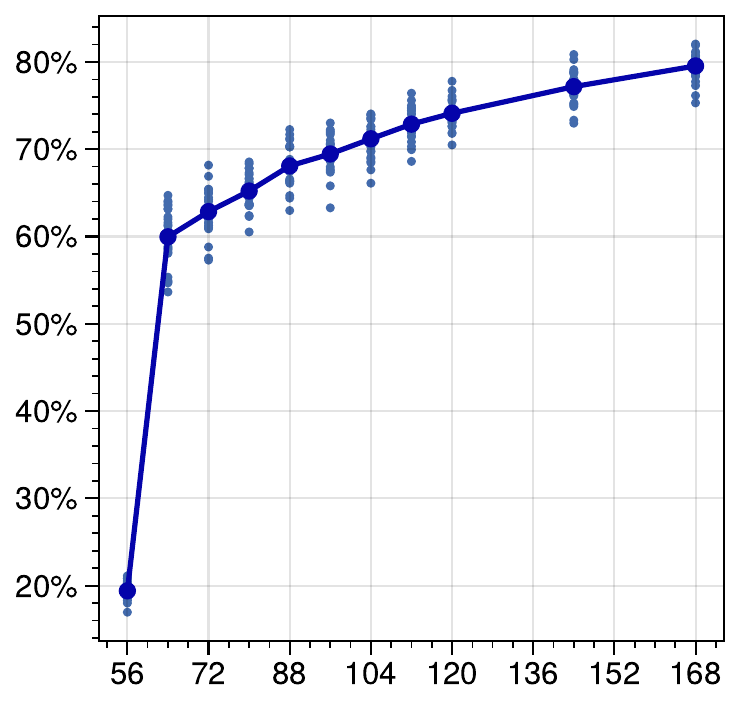}\label{fig:hess_ising_g1_init_small_param}}
\subfloat[Gradient overlap with $\mathcal{P}_{\text{small}}$]{
    \includegraphics[height=4.1cm]{./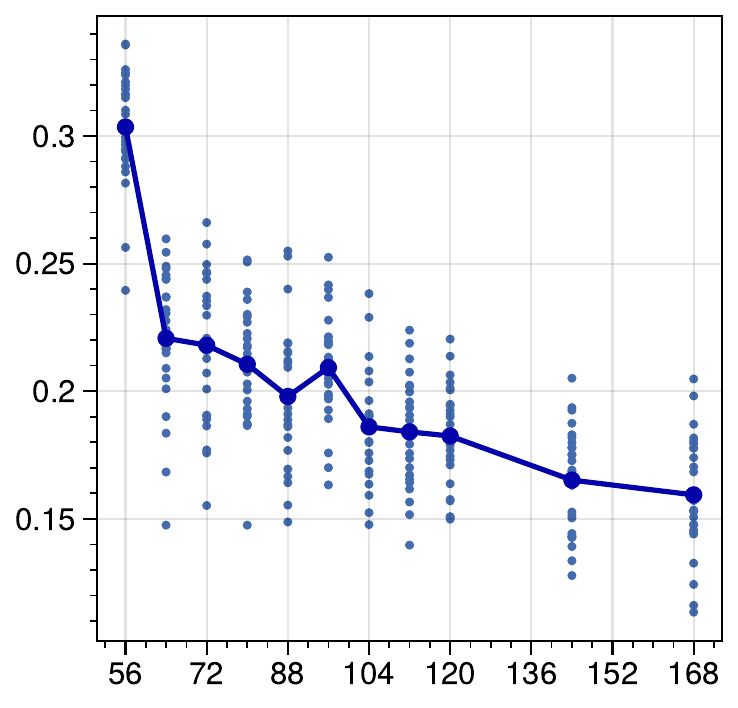}\label{fig:hess_ising_g1_init_overlap_param}}
    \caption{
    Characteristic plots for the Hessian eigenspectrum with $56$ entangling and $(L-56)$ rotation layers, based on 26 instances for each $L \in\{56, 64, \cdots, 112, 120, 144, 168\}$, at randomly initialized circuit parameters. 
    Adding control parameters with a fixed amount of the entanglement capability has qualitatively the same effect as fixing the number of control parameters and reducing the entanglement capability. See Figure~\ref{fig:hess_ising_g1_init} that looks qualitatively analogous.}
    \label{fig:hess_ising_g1_init_param}
\end{figure*}

\begin{figure*}[t]
\centering
\subfloat[Top/bottom eigenvalues]{
    \includegraphics[height=4.1cm]{./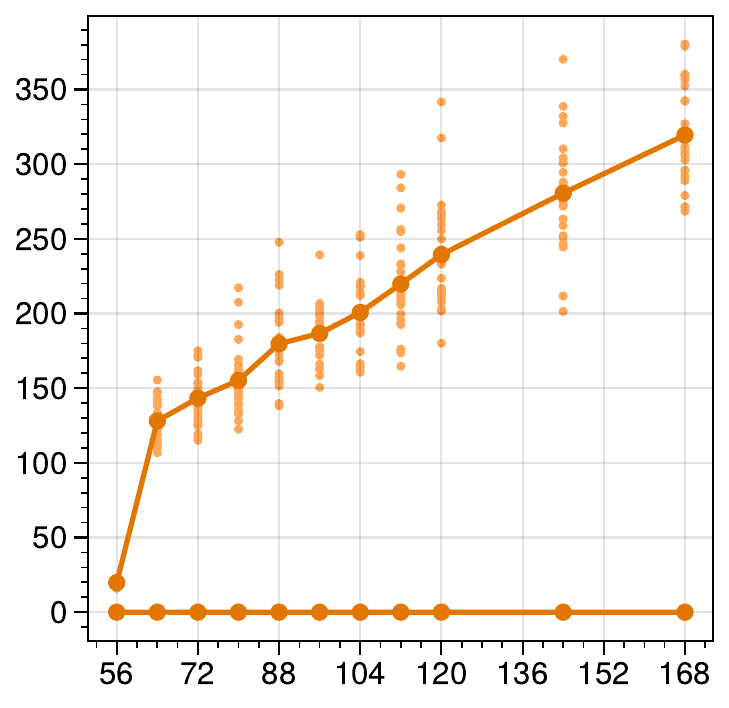}\label{fig:hess_ising_g1_final_topbottom_param}}
    \subfloat[\% of large eigenvalues ($|\lambda| > 25$)]{
    \includegraphics[height=4.1cm]{./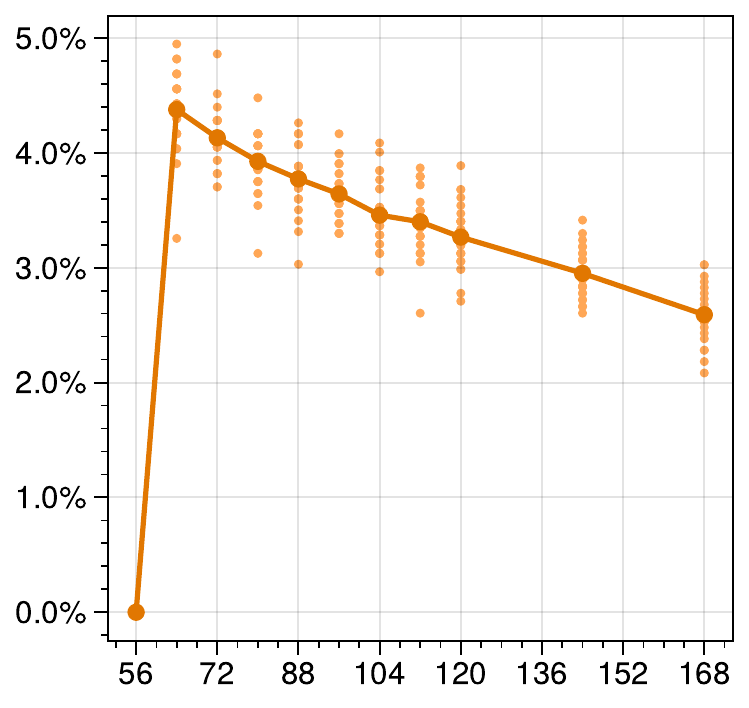}\label{fig:hess_ising_g1_final_large_param}}
\subfloat[\% of small eigenvalues ($|\lambda| < 0.2$)]{
    \includegraphics[height=4.1cm]{./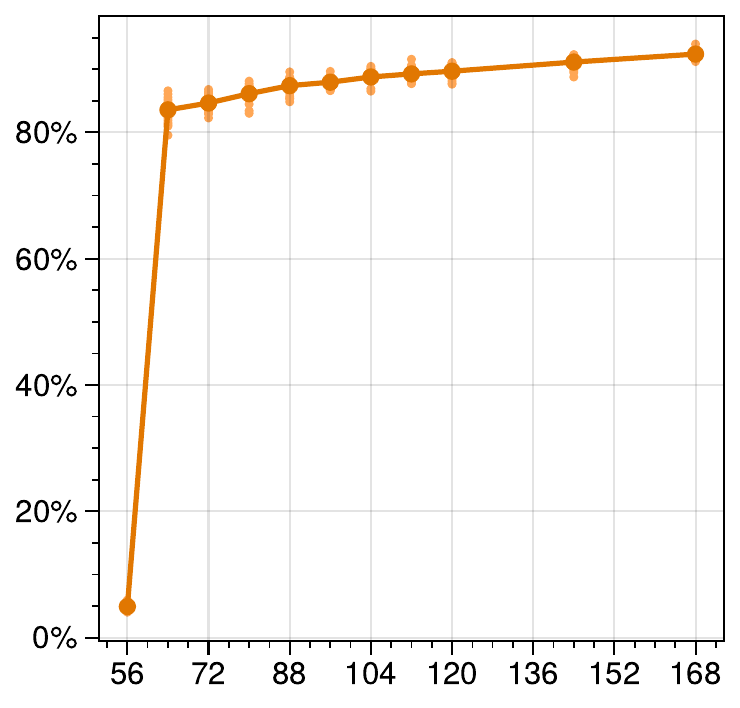}\label{fig:hess_ising_g1_final_small_param}}
\subfloat[Gradient overlap with $\mathcal{P}_{\text{small}}$]{
    \includegraphics[height=4.1cm]{./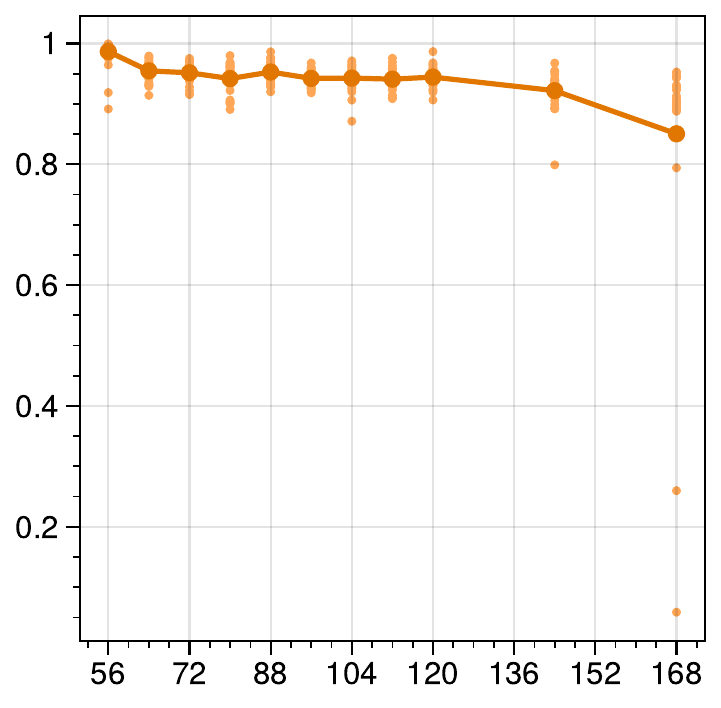}\label{fig:hess_ising_g1_final_overlap_param}}
    \caption{Characteristic plots for the Hessian eigenspectrum with $56$ entangling and $(L-56)$ rotation layers, based on 26 VQA instances for each $L \in\{56, 64, \cdots, 112, 120, 144, 168\}$, after $5000$ steps of the parameter update.
    They are qualitatively similar to Figure~\ref{fig:hess_ising_g1_final}.}
    \label{fig:hess_ising_g1_final_param}
\end{figure*}

\vfill
\section*{Acknowledgements}
We thank Kishor Bharti, Thi Ha Kyaw, Dario Rosa for valuable discussions.
The work of J.K. is supported by the NSF grant PHY-1911298 and the Sivian fund.
The work of Y.O. is supported in part by  Israel Science Foundation Center
of Excellence, the IBM Einstein Fellowship and John and Maureen Hendricks Charitable Foundation
at the Institute for Advanced Study in Princeton.
Our Python code for the numerical experiments is written in TensorFlow Quantum \cite{TFQ}. The experimental data are managed by using Comet.ML \cite{CometML}.

\pagebreak
\providecommand{\href}[2]{#2}\begingroup\raggedright\endgroup


\begin{thebibliography}{10}

\bibitem{VQE2014}
A.~Peruzzo, J.~McClean, P.~Shadbolt, M.-H. Yung, X.-Q. Zhou, P.~J. Love,
  A.~Aspuru-Guzik, and J.~L. O’Brien, ``A variational eigenvalue solver on a
  photonic quantum processor,''
  \href{http://dx.doi.org/10.1038/ncomms5213}{{\em Nature Communications}
  {\bfseries 5} no.~1, (Jul, 2014) }.
  \url{http://dx.doi.org/10.1038/ncomms5213}.

\bibitem{QAOA}
E.~Farhi, J.~Goldstone, and S.~Gutmann, ``{A Quantum Approximate Optimization
  Algorithm},'' \href{http://arxiv.org/abs/1411.4028}{{\ttfamily
  arXiv:1411.4028 [quant-ph]}}.

\bibitem{Endo_2021}
S.~Endo, Z.~Cai, S.~C. Benjamin, and X.~Yuan, ``Hybrid quantum-classical
  algorithms and quantum error mitigation,''
  \href{http://dx.doi.org/10.7566/jpsj.90.032001}{{\em Journal of the Physical
  Society of Japan} {\bfseries 90} no.~3, (Mar, 2021) 032001}.
  \url{http://dx.doi.org/10.7566/JPSJ.90.032001}.

\bibitem{cerezo2020variational}
M.~Cerezo, A.~Arrasmith, R.~Babbush, S.~C. Benjamin, S.~Endo, K.~Fujii, J.~R.
  McClean, K.~Mitarai, X.~Yuan, L.~Cincio, and P.~J. Coles, ``Variational
  quantum algorithms,'' \href{http://arxiv.org/abs/2012.09265}{{\ttfamily
  arXiv:2012.09265 [quant-ph]}}.

\bibitem{Kandala_2017_hardware}
A.~Kandala, A.~Mezzacapo, K.~Temme, M.~Takita, M.~Brink, J.~M. Chow, and J.~M.
  Gambetta, ``Hardware-efficient variational quantum eigensolver for small
  molecules and quantum magnets,''
  \href{http://dx.doi.org/10.1038/nature23879}{{\em Nature} {\bfseries 549}
  no.~7671, (Sep, 2017) 242–246}.
  \url{http://dx.doi.org/10.1038/nature23879}.

\bibitem{holmes2021connecting}
Z.~Holmes, K.~Sharma, M.~Cerezo, and P.~J. Coles, ``Connecting ansatz
  expressibility to gradient magnitudes and barren plateaus,''
  \href{http://arxiv.org/abs/2101.02138}{{\ttfamily arXiv:2101.02138
  [quant-ph]}}.

\bibitem{Kim:2021ffs}
J.~Kim and Y.~Oz, ``{Entanglement Diagnostics for Efficient Quantum
  Computation},'' \href{http://arxiv.org/abs/2102.12534}{{\ttfamily
  arXiv:2102.12534 [quant-ph]}}.

\bibitem{Hastings_2007}
M.~B. Hastings, ``An area law for one-dimensional quantum systems,''
  \href{http://dx.doi.org/10.1088/1742-5468/2007/08/p08024}{{\em Journal of
  Statistical Mechanics: Theory and Experiment} {\bfseries 2007} no.~08, (Aug,
  2007) P08024–P08024}.
  \url{http://dx.doi.org/10.1088/1742-5468/2007/08/P08024}.

\bibitem{McClean2018bp}
J.~R. McClean, S.~Boixo, V.~N. Smelyanskiy, R.~Babbush, and H.~Neven, ``Barren
  plateaus in quantum neural network training landscapes,''
  \href{http://dx.doi.org/10.1038/s41467-018-07090-4}{{\em Nature
  Communications} {\bfseries 9} no.~1, (Nov, 2018) }.
  \url{http://dx.doi.org/10.1038/s41467-018-07090-4}.

\bibitem{arrasmith2021equivalence}
A.~Arrasmith, Z.~Holmes, M.~Cerezo, and P.~J. Coles, ``Equivalence of quantum
  barren plateaus to cost concentration and narrow gorges,''
  \href{http://arxiv.org/abs/2104.05868}{{\ttfamily arXiv:2104.05868
  [quant-ph]}}.

\bibitem{highdepth}
J.~Kim, J.~Kim, and D.~Rosa, ``Universal effectiveness of high-depth circuits
  in variational eigenproblems,''
  \href{http://dx.doi.org/10.1103/physrevresearch.3.023203}{{\em Physical
  Review Research} {\bfseries 3} no.~2, (Jun, 2021) }.
  \url{http://dx.doi.org/10.1103/PhysRevResearch.3.023203}.

\bibitem{cerezo2020hessian_bp}
M.~Cerezo and P.~J. Coles, ``Impact of barren plateaus on the hessian and
  higher order derivatives,'' \href{http://arxiv.org/abs/2008.07454}{{\ttfamily
  arXiv:2008.07454 [quant-ph]}}.

\bibitem{Huembeli_2021_hessian_convergence}
P.~Huembeli and A.~Dauphin, ``Characterizing the loss landscape of variational
  quantum circuits,'' \href{http://dx.doi.org/10.1088/2058-9565/abdbc9}{{\em
  Quantum Science and Technology} {\bfseries 6} no.~2, (Feb, 2021) 025011}.
  \url{http://dx.doi.org/10.1088/2058-9565/abdbc9}.

\bibitem{sagun2017eigenvalues}
L.~Sagun, L.~Bottou, and Y.~LeCun, ``Eigenvalues of the hessian in deep
  learning: Singularity and beyond,''
  \href{http://arxiv.org/abs/1611.07476}{{\ttfamily arXiv:1611.07476 [cs.LG]}}.

\bibitem{ghorbani2019investigation}
B.~Ghorbani, S.~Krishnan, and Y.~Xiao, ``An investigation into neural net
  optimization via hessian eigenvalue density,''
  \href{http://arxiv.org/abs/1901.10159}{{\ttfamily arXiv:1901.10159 [cs.LG]}}.

\bibitem{fort2019emergent}
S.~Fort and S.~Ganguli, ``Emergent properties of the local geometry of neural
  loss landscapes,'' \href{http://arxiv.org/abs/1910.05929}{{\ttfamily
  arXiv:1910.05929 [cs.LG]}}.

\bibitem{fontana2020optimizing}
E.~Fontana, M.~Cerezo, A.~Arrasmith, I.~Rungger, and P.~J. Coles, ``Optimizing
  parametrized quantum circuits via noise-induced breaking of symmetries,''
  \href{http://arxiv.org/abs/2011.08763}{{\ttfamily arXiv:2011.08763
  [quant-ph]}}.

\bibitem{Collins_2006}
B.~Collins and P.~Śniady, ``Integration with respect to the haar measure on
  unitary, orthogonal and symplectic group,''
  \href{http://dx.doi.org/10.1007/s00220-006-1554-3}{{\em Communications in
  Mathematical Physics} {\bfseries 264} no.~3, (Mar, 2006) 773–795}.
  \url{http://dx.doi.org/10.1007/s00220-006-1554-3}.

\bibitem{Eisert_2010}
J.~Eisert, M.~Cramer, and M.~B. Plenio, ``Colloquium: Area laws for the
  entanglement entropy,''
  \href{http://dx.doi.org/10.1103/revmodphys.82.277}{{\em Reviews of Modern
  Physics} {\bfseries 82} no.~1, (Feb, 2010) 277–306}.
  \url{http://dx.doi.org/10.1103/RevModPhys.82.277}.

\bibitem{entanglement-bp}
C.~O. Marrero, M.~Kieferová, and N.~Wiebe, ``Entanglement induced barren
  plateaus,'' \href{http://arxiv.org/abs/2010.15968}{{\ttfamily
  arXiv:2010.15968 [quant-ph]}}.

\bibitem{liu2021loss}
C.~Liu, L.~Zhu, and M.~Belkin, ``Loss landscapes and optimization in
  over-parameterized non-linear systems and neural networks,''
  \href{http://arxiv.org/abs/2003.00307}{{\ttfamily arXiv:2003.00307 [cs.LG]}}.

\bibitem{gurari2018gradient}
G.~Gur-Ari, D.~A. Roberts, and E.~Dyer, ``Gradient descent happens in a tiny
  subspace,'' \href{http://arxiv.org/abs/1812.04754}{{\ttfamily
  arXiv:1812.04754 [cs.LG]}}.

\bibitem{DBLP:journals/corr/abs-2012-04728}
D.~Kunin, J.~Sagastuy{-}Bre{\~{n}}a, S.~Ganguli, D.~L.~K. Yamins, and
  H.~Tanaka, ``Neural mechanics: Symmetry and broken conservation laws in deep
  learning dynamics,'' {\em CoRR} {\bfseries abs/2012.04728} (2020) ,
  \href{http://arxiv.org/abs/2012.04728}{{\ttfamily arXiv:2012.04728}}.
  \url{https://arxiv.org/abs/2012.04728}.

\bibitem{Funcke_2021}
L.~Funcke, T.~Hartung, K.~Jansen, S.~Kühn, and P.~Stornati, ``Dimensional
  expressivity analysis of parametric quantum circuits,''
  \href{http://dx.doi.org/10.22331/q-2021-03-29-422}{{\em Quantum} {\bfseries
  5} (Mar, 2021) 422}. \url{http://dx.doi.org/10.22331/q-2021-03-29-422}.

\bibitem{Sim_2019}
S.~Sim, P.~D. Johnson, and A.~Aspuru‐Guzik, ``Expressibility and entangling
  capability of parameterized quantum circuits for hybrid quantum‐classical
  algorithms,'' \href{http://dx.doi.org/10.1002/qute.201900070}{{\em Advanced
  Quantum Technologies} {\bfseries 2} no.~12, (Oct, 2019) 1900070}.
  \url{http://dx.doi.org/10.1002/qute.201900070}.

\bibitem{hubregtsen2020evaluation}
T.~Hubregtsen, J.~Pichlmeier, P.~Stecher, and K.~Bertels, ``Evaluation of
  parameterized quantum circuits: on the relation between classification
  accuracy, expressibility and entangling capability,''
  \href{http://arxiv.org/abs/2003.09887}{{\ttfamily arXiv:2003.09887
  [quant-ph]}}.

\bibitem{Rasmussen_2020}
S.~E. Rasmussen, N.~J.~S. Loft, T.~Bækkegaard, M.~Kues, and N.~T. Zinner,
  ``Reducing the amount of single‐qubit rotations in vqe and related
  algorithms,'' \href{http://dx.doi.org/10.1002/qute.202000063}{{\em Advanced
  Quantum Technologies} {\bfseries 3} no.~12, (Oct, 2020) 2000063}.
  \url{http://dx.doi.org/10.1002/qute.202000063}.

\bibitem{barron2020measurement}
G.~S. Barron and C.~J. Wood, ``Measurement error mitigation for variational
  quantum algorithms,'' \href{http://arxiv.org/abs/2010.08520}{{\ttfamily
  arXiv:2010.08520 [quant-ph]}}.

\bibitem{zeng2021simulating}
J.~Zeng, Z.~Wu, C.~Cao, C.~Zhang, S.~Hou, P.~Xu, and B.~Zeng, ``Simulating
  noisy variational quantum eigensolver with local noise models,''
  \href{http://arxiv.org/abs/2010.14821}{{\ttfamily arXiv:2010.14821
  [quant-ph]}}.

\bibitem{TFQ}
M.~Broughton, G.~Verdon, T.~McCourt, A.~J. Martinez, J.~H. Yoo, S.~V. Isakov,
  P.~Massey, M.~Y. Niu, R.~Halavati, E.~Peters, M.~Leib, A.~Skolik, M.~Streif,
  D.~V. Dollen, J.~R. McClean, S.~Boixo, D.~Bacon, A.~K. Ho, H.~Neven, and
  M.~Mohseni, ``{TensorFlow Quantum: A Software Framework for Quantum Machine
  Learning},'' \href{http://arxiv.org/abs/2003.02989}{{\ttfamily
  arXiv:2003.02989 [quant-ph]}}.

\bibitem{CometML}
``{\texttt{Comet.ML}}.''
\newblock \url{https://www.comet.ml/}.

\end{thebibliography}
\end{document}